\newcommand{\eps}{\varepsilon}
\renewcommand{\phi}{\varphi}
\renewcommand{\rho}{\varrho}
\DeclareMathOperator*{\E}{\mathbb{E}}
\DeclareMathOperator*{\R}{\mathbb{R}}
\DeclarePairedDelimiter\ceil{\lceil}{\rceil}
\let\setminus=\smallsetminus
\declaretheorem[parent=section]{theorem}
\declaretheorem[sibling=theorem]{proposition}
\declaretheorem[sibling=theorem]{claim}
\setlist{itemsep=0.1em, topsep=0.1em, parsep=0.1em, partopsep=0.1em}
\colorlet{RoyalRed}{red!70!black}
\definecolor{RoyalBlue}{rgb}{0.25, 0.41, 0.88}
\definecolor{RoyalAzure}{rgb}{0.0, 0.22, 0.66}
\newcommand{\srbsquare}{\rotatebox{45}{\tiny{\ensuremath{\blacksquare}}}}
\newcommand{\asscheck}{\hskip.2em \null \hfill \srbsquare}
\asscheck\end{adjustwidth*}}
\title{An Optimal Decentralized \texorpdfstring{$(\Delta + 1)$}{(Delta +
1)}-Coloring Algorithm}
\author{
  Daniel Bertschinger\thanks{Email: \{\texttt{daniel.bertschinger}\textbar
      \texttt{johannes.lengler}\textbar \texttt{anders.martinsson}\textbar
      \texttt{steger}\textbar \texttt{mtrujic}\textbar
      \texttt{emo}\}\texttt{@inf.ethz.ch}}
  \and
  Johannes Lengler\footnotemark[1]
  \and
  Anders Martinsson\footnotemark[1]
  \and
  Robert Meier\thanks{Email: \texttt{romeier@student.ethz.ch}}
  \and
  Angelika Steger\footnotemark[1]
  \and
  Milo\v{s} Truji\'{c}\footnotemark[1] \textsuperscript{,}\thanks{author was
  supported by grant no.\ 200021 169242 of the Swiss National Science
  Foundation.}
  \and
  Emo Welzl\footnotemark[1]
}
\renewcommand\@date{{%
    Department of Computer Science \\ ETH Z\"{u}rich, Switzerland
    \date{}
}}
\begin{document}

\maketitle

\begin{abstract}
  Consider the following simple coloring algorithm for a graph on $n$ vertices.
  Each vertex chooses a color from $\{1, \dotsc, \Delta(G) + 1\}$ uniformly at
  random. While there exists a conflicted vertex choose one such vertex
  uniformly at random and recolor it with a randomly chosen color. This
  algorithm was introduced by Bhartia et al. [MOBIHOC'16] for channel selection
  in WIFI-networks. We show that this algorithm always converges to a proper
  coloring in expected $O(n \log \Delta)$ steps, which is optimal and proves a
  conjecture of Chakrabarty and de Supinski~[SOSA'20].
\end{abstract}

\section{Introduction}

It is well known that an undirected graph $G=(V,E)$ with maximum degree $\Delta
= \Delta(G)$ can be properly colored by using $\Delta+1$ colors. In fact, a
simple greedy algorithm which assigns the colors successively achieves this
bound by just touching each vertex once. Note that the bound $\Delta+1$ is
tight, as cliques and odd cycles require this number of colors.

In \cite{bhartia2016iq} Bhartia et al.\ introduced the use of a simple
decentralized coloring algorithm as an efficient solution to the channel
selection problem in wireless networks. Their algorithm can be formulated as
follows.

\begin{tcolorbox}[colback=black!15!white, title=\scshape{Decentralized Graph
Coloring}]
  \label{alg:decentr-coloring}

  For a graph $G = (V, E)$
  \begin{enumerate}
    \item choose for each vertex $v \in V$ a color from $\{1, \dotsc, \Delta +
      1\}$ independently and uniformly at random;
    \item\label{alg-vertex} choose a vertex $v \in V$ uniformly at random among
      all vertices which have a neighbor in the same color;
    \item\label{alg-colour} recolor $v$ into a color chosen from $\{1, \dotsc,
      \Delta + 1\}$ uniformly at random;
    \item repeat steps \ref{alg-vertex} and \ref{alg-colour} until a proper
      coloring of $G$ is found.
  \end{enumerate}
\end{tcolorbox}

They showed that this algorithm finds a proper coloring in $O(n\Delta)$ rounds
in expectation. Chakrabarty and de Supinski~\cite{chakrabarty2020decentralized}
introduced a variant of the coloring algorithm: instead of recoloring a vertex
$v$ once as above, in their `Persistent Decentralized Coloring Algorithm' such a vertex
$v$ persistently (hence the name) gets recolored until it has no neighbor in the
same color. They showed that this modified algorithm only requires
$O(n\log\Delta)$ recolorings and conjectured that the same bound also holds for
the original algorithm. In this paper we prove their conjecture.

\begin{theorem}\label{thm:main}
  The decentralized coloring algorithm converges in expectation to a proper
  $(\Delta+1)$-coloring in $O(n\log\Delta)$ recoloring steps.
\end{theorem}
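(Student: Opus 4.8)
The plan is to track the number $\Phi_t$ of monochromatic edges in the current colouring and show that it reaches $0$ within $O(n\log\Delta)$ recolouring steps in expectation; a direct coupling with the persistent variant does not seem to survive the fact that the two algorithms re-select conflicted vertices at different moments, so I would argue by a potential function instead. First I would record the easy facts: $\Phi_t$ is a non-negative integer, the algorithm halts exactly when $\Phi_t=0$, and $\E[\Phi_0]=|E|/(\Delta+1)\le n/2$, since each edge is initially monochromatic with probability $1/(\Delta+1)$. It then suffices to bound the expected number $T$ of steps, and I would split $T=T_{\ge 2}+T_1$ according to whether the conflicted vertex $v$ picked at that step has $m_v\ge 2$ or $m_v=1$ monochromatic neighbours.

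For the bound on $T_{\ge 2}$: if $v$ is recoloured with a uniform colour $I$ and $d_i$ denotes the number of neighbours of $v$ coloured $i$, then $\Phi$ changes by $d_I-m_v$, with conditional expectation $\deg(v)/(\Delta+1)-m_v\le \tfrac{\Delta}{\Delta+1}-m_v$. Hence a step with $m_v\ge 2$ decreases $\Phi$ by at least $2-\tfrac{\Delta}{\Delta+1}=1+\tfrac1{\Delta+1}>1$ in expectation, while a step with $m_v=1$ still decreases it by at least $\tfrac1{\Delta+1}>0$. Comparing $\E[\Phi_0]$ with the total expected decrease $\E\big[\sum_{t<T}(m_{v_t}-\deg(v_t)/(\Delta+1))\big]$ (justified by a standard optional-stopping/Wald argument, using the crude $O(n\Delta)$ drift bound to control $\E[T]<\infty$) gives $\E[T_{\ge 2}]\le \E[\Phi_0]=O(n)$.

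The crux — and the step I expect to be the main obstacle — is $\E[T_1]=O(n\log\Delta)$. In a step with $m_v=1$, the vertex $v$ has a single monochromatic neighbour $u$, and recolouring $v$ typically just transfers the conflict from $u$ to another neighbour $w$ of $v$ (a ``hot-potato'' step), leaving $\Phi_t$, and even the number of conflicted vertices, unchanged; the crude potential is therefore useless in this regime. The reason the answer is $\log\Delta$ and not $\Delta$ must come from amortisation: a vertex whose $\Delta$ neighbours are rainbow-coloured genuinely needs $\Theta(\Delta)$ recolourings to hit its unique free colour, but such costs cannot accumulate across the graph. I would thus introduce a secondary, amortised potential assigning to each currently uniquely-conflicted vertex a weight of order $\log$ of the number of colours still blocking it, and show (i) this potential is $O(n\log\Delta)$ throughout, because only $O(n)$ monochromatic edges are ever created — by the $O(n)$ heavy steps together with the initial colouring — and each contributes at most $O(\log\Delta)$; and (ii) each hot-potato recolouring of $v$ makes $\Omega(1)$ progress against it, since the colour $v$ moves onto is effectively spent for the purpose of resolving the edge just broken, whereas a released colour cannot be re-blocked except by a separately chargeable neighbour recolouring. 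Making (ii) rigorous — in particular ruling out that a neighbour's recolouring silently re-creates a conflict on an edge believed resolved, and verifying that the weight is monotone under every step type — is the delicate part. Combining $\E[T_{\ge 2}]=O(n)$ with $\E[T_1]=O(n\log\Delta)$ then yields $\E[T]=O(n\log\Delta)$, as claimed.
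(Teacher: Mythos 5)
Your first half is sound: the computation that a step at a vertex with $m_v\ge 2$ monochromatic neighbours decreases $|M_t|$ by more than $1$ in expectation, while a step with $m_v=1$ still has non-positive drift, together with optional stopping and $\E[|M_0|]=O(n)$, does give $\E[T_{\ge 2}]=O(n)$; this is essentially the paper's Claim~\ref{cl:component-edges} and Proposition~\ref{prop:endgame}. The gap is exactly where you flag it, and the amortisation you sketch for $\E[T_1]$ cannot be repaired in the form stated. Concretely: (a) your claim (i) rests on ``only $O(n)$ monochromatic edges are ever created'', which is false --- every hot-potato step itself creates a new monochromatic edge, and on $K_n$ (or a disjoint union of cliques $K_\Delta$) the process creates $\Theta(n\log\Delta)$ of them over its lifetime; (b) your claim (ii) fails on a clique: when $v$, conflicted only with $u$, lands on the colour of a properly coloured neighbour $w$, the new conflict $vw$ is blocked by essentially the same set of $\approx\Delta$ colours as the old one, so no quantity of size $O(\log\Delta)$ attached to the conflicted vertex can decrease by $\Omega(1)$. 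In fact (i) and (ii) are jointly impossible for \emph{any} choice of weights: on $K_n$, once only the last conflicted edge remains there are no further heavy steps, so under your scheme the remaining budget is $O(\log n)$ and cannot grow, yet the algorithm needs $\Theta(n)$ further recolourings in expectation (each step hits the unique free colour with probability $\approx 1/n$). The $\log\Delta$ saving is not a worst-case charging phenomenon: an individual isolated conflict may legitimately cost $\Theta(\Delta)$ steps, and the gain comes only in expectation, from averaging over which conflicted vertex the algorithm happens to pick.

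For comparison, the paper converts your ``blocking colours'' intuition into a drift statement rather than a monotone potential. It uses $\Phi(t)=|M_t|+\tfrac1{10}|I_t|+\tfrac1{100\Delta}e(V(I_t),P_t)$, where $I_t$ are isolated monochromatic edges and $P_t$ the properly coloured vertices: the dangerous configurations are isolated edges both of whose endpoints see close to $\Delta$ distinctly coloured proper neighbours, and the term $e(V(I_t),P_t)$ then shrinks in expectation because those proper vertices must have large average degree back into $V(I_{t-1})$, so (by a Jensen/averaging step, Claim~\ref{cl:mult}) the proper vertex absorbed into a newly created isolated edge brings in fewer edges than the resolved one releases. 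This yields the multiplicative drift $\E[\Phi(t)]\le\Phi(t-1)\bigl(1-\tfrac1{1000n}\bigr)$, whence $O(n\log\Delta)$ steps to push $\Phi$ below $n/\Delta$ by Theorem~\ref{drift:multiplicative}, and then the additive drift $-1/(\Delta+1)$ (your heavy/light computation) finishes in $O(n)$ steps. If you want to salvage your outline, the missing ingredient is a proof that, conditioned on recolouring an endpoint of an isolated edge, some global functional decreases \emph{in expectation} proportionally to how blocked the isolated edges are --- i.e.\ a second-order term like $e(V(I_t),P_t)$ --- not a per-step $\Omega(1)$ combinatorial charge.
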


In fact, our argument shows that the same runtime bound holds true if the
initial coloring is chosen adversarially. This is in contrast with the
persistent version of the algorithm mentioned above, as that one takes
$\Theta(n\Delta)$ recolorings in expectation when starting with an adversarial
coloring (see~\cite[Theorem~3]{bhartia2016iq}). However, the question raised in
\cite{chakrabarty2020decentralized} of `which algorithm is faster in the random
setting' remains open.

We note that the bound in Theorem~\ref{thm:main} is best possible, as for the
complete graph $K_n$ the decentralized coloring algorithm essentially performs a
{\em Coupon Collector} process. Indeed, once a color (coupon) has been acquired
it remains in the graph until the end of the process and we need to see all
colors. The claim thus follows from the well known fact that in expectation the
coupon collector process with $n$ coupons requires $nH_n = \Theta(n \log n)$
rounds, where $H_k = \sum_{i = 1}^{k} \frac{1}{i}$ is the $k$-th Harmonic
number. Moreover, the result is tight for every combination of $n$ and $\Delta$.
Namely, consider a vertex-disjoint union of $n/\Delta$ complete graphs
$K_\Delta$ and by the same argument the process requires $\Theta(n/\Delta \cdot
\Delta\log\Delta) = \Theta(n\log\Delta)$ rounds.

Our proof of Theorem~\ref{thm:main} is short and elegant, and is based on
\emph{drift analysis}~\cite{Lengler2020}. It is presented in an expository way
and provides insight in why our potential function is appropriate for the
analysis. We complement the analyis by tail bounds in Section 3.

Finally, we conclude the paper by a brief discussion of the parallelized version of this
algorithm, where all `conflicted' vertices get recolored simultaneously (instead
of Step~\ref{alg-colour}), and we prove that this variant takes exponential
time.

\section{Proof of Theorem~\ref{thm:main}}

We start with introducing some notation. We use $c_t$ to denote the coloring of
the graph after $t$ recoloring steps, that is $c_t$ is a function $c_t \colon
V(G) \to \{1, \dotsc, \Delta + 1\}$. With $M_t \subseteq E$ we denote the set of
{\em monochromatic edges} in $c_t$. Observe that $c_t$ is a proper coloring of
$G$ if and only if $|M_t| = 0$. Our main goal is thus to establish good bounds
on (the reduction of) the size of the sets $M_t$. In order to do so it is
helpful to view the recoloring step(s) (i.e.\ Step~\ref{alg-vertex} and
Step~\ref{alg-colour}) as a (slightly different) three step process:

\begin{tcolorbox}[colback=black!15!white, title=\scshape{Recoloring Step}]
  For every $t \geq 1$,
  \begin{enumerate}[\hspace{0.2em} S1]
    \item\label{proc-comp} choose a monochromatic connected component $C
      \subseteq M_{t-1}$ at random proportional to the number of vertices in
      $C$;
    \item\label{proc-select-vertex} choose a vertex $v \in V(C)$ uniformly at
      random;
    \item\label{proc-recolour} let $c_t(v)$ be a uniformly at random chosen
      color from $\{1, \dotsc, \Delta + 1\}$ and set $c_t(u) = c_{t - 1}(u)$ for
      all $ u \neq v$.
  \end{enumerate}
\end{tcolorbox}

As a main tool in bounding the expected number of recoloring steps we use a
so-called drift theorem (see~\cite[Theorem~2.3.1]{Lengler2020}).

\begin{theorem}[Additive Drift Theorem~\cite{he2004study}]
  \label{drift:simple}
  Let $(X_t)_{t \geq 0}$ be a sequence of non-negative random variables with a
  finite state space $\mathcal{S} \subset \R^+_0$ such that $0 \in \mathcal{S}$.
  Let $T := \inf \{ t \geq 0 \mid X_t = 0 \}$. If there exists $\delta > 0$ such
  that for all $s \in \mathcal{S} \setminus \{ 0 \}$ and for all $t > 0$,
  \begin{equation}\label{eq:drift}
    \E[X_{t} - X_{t-1} \mid X_t = s] \leq - \delta,
  \end{equation}
  then
  \[
    \E[T] \leq {\E[X_0]}\;/\;{\delta}.
  \]
\end{theorem}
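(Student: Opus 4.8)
The plan is to analyse the \emph{stopped} process $(X_{t\wedge T})_{t\ge 0}$, where $t\wedge T:=\min\{t,T\}$, and to show that $\E[X_{t\wedge T}]+\delta\,\E[t\wedge T]\le\E[X_0]$ for every $t$. Letting $t\to\infty$ then forces $\delta\,\E[T]\le\E[X_0]$, which is the assertion. Passing to the stopped process is exactly what makes hypothesis~\eqref{eq:drift} usable: the drift bound only controls a step taken from a state $s\ne 0$, while nothing is assumed about the dynamics once the process reaches $0$ (in particular it may wander away again), so we must freeze the walk at the first hitting time $T$ before chaining one-step estimates.

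Concretely, I would first record the pathwise telescoping identity
\[
  X_0-X_{t\wedge T}\;=\;\sum_{i=1}^{t}\bigl(X_{i-1}-X_i\bigr)\,\1[\,i\le T\,],
\]
which holds because the $i$-th increment is counted precisely while the walk has not yet hit $0$. The key point is that the event $\{i\le T\}=\{X_0\ne 0,\dots,X_{i-1}\ne 0\}$ is determined by $X_0,\dots,X_{i-1}$; hence, conditioning on the history up to time $i-1$ and applying~\eqref{eq:drift} on $\{i\le T\}$ (where $X_{i-1}\ne 0$) gives $\E\bigl[(X_{i-1}-X_i)\,\1[i\le T]\bigr]\ge\delta\,\Pr[T\ge i]$. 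Summing over $i=1,\dots,t$ and using $\sum_{i=1}^{t}\Pr[T\ge i]=\E[t\wedge T]$ yields
\[
  \E[X_0]\;\ge\;\E[X_0]-\E[X_{t\wedge T}]\;=\;\sum_{i=1}^{t}\E\bigl[(X_{i-1}-X_i)\,\1[i\le T]\bigr]\;\ge\;\delta\,\E[t\wedge T],
\]
where the first inequality uses $X_{t\wedge T}\ge 0$. Equivalently, one may phrase this as: $Z_t:=X_{t\wedge T}+\delta\,(t\wedge T)$ is a non-negative supermartingale for the natural filtration, so $\E[Z_t]\le\E[Z_0]=\E[X_0]$.

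Finally, since $0\le t\wedge T\uparrow T$ as $t\to\infty$, the monotone convergence theorem gives $\E[T]=\lim_{t\to\infty}\E[t\wedge T]\le\E[X_0]/\delta$; here $\E[X_0]<\infty$ because $\mathcal S$ is finite, hence bounded, which incidentally also yields $T<\infty$ almost surely. The only delicate points are the ones already flagged: one must stop the process at $T$ rather than reason with $(X_t)$ directly; the interchange of limit and expectation at the end needs a justification, supplied for free by finiteness of the state space; and the drift hypothesis may only be invoked on $\{i\le T\}$ because that event is measurable with respect to the past, which is the step where~\eqref{eq:drift} is read as holding conditionally on the whole history, as is standard in drift analysis. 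No step is genuinely hard — the ``main obstacle'' is simply to set up the stopped process cleanly so that the one-step estimate can be summed.
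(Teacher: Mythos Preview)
The paper does not actually prove this theorem; it is quoted from the literature (He--Yao~\cite{he2004study}, cf.~\cite{Lengler2020}) and used as a black box. Your argument is the standard one and is correct: pass to the stopped process, telescope, exploit that $\{T\ge i\}=\{X_0\ne 0,\dots,X_{i-1}\ne 0\}$ is $\mathcal F_{i-1}$-measurable so the one-step drift bound can be pulled through the indicator, sum to obtain $\E[X_0]\ge\delta\,\E[t\wedge T]$, and let $t\to\infty$ via monotone convergence. The equivalent supermartingale phrasing with $Z_t=X_{t\wedge T}+\delta(t\wedge T)$ is exactly how the result is usually packaged.

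One remark worth making explicit (you already flag it in passing): the hypothesis~\eqref{eq:drift} as printed conditions on $X_t=s$ rather than on $X_{t-1}=s$; this is almost certainly a typo, and your proof --- like every application in the paper, which conditions on the full configuration $c_{t-1}$ --- needs the past-measurable version $\E[X_t-X_{t-1}\mid\mathcal F_{t-1}]\le-\delta$ on $\{X_{t-1}\ne 0\}$. With that reading there is nothing to object to.
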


By {\em drift} we refer to the expectation $\E[X_t - X_{t - 1} \mid X_t = s]$.
(For a more extensive introduction to drift analyis, we refer the reader to~\cite{Lengler2020}.) Our goal is to apply
Theorem~\ref{drift:simple} by assigning to each coloring $c_t$ a real value
$\Phi(t)$ (which we plug in for $X_t$) so that $\Phi(t) = 0$ if and only if
$c_t$ is a proper coloring. The {\em potential function} $\Phi(\cdot)$ we
eventually use to prove Theorem~\ref{thm:main} consists of several terms (see
equation~\eqref{eq:phi} below) and in order to motivate each of the terms we
introduce them one by one. The simplest and most natural choice is to consider
just the number of monochromatic edges, i.e.\ $\Phi(t) := |M_t|$. (Mind that
this is only for explanatory purposes and will not be the final definition of
$\Phi$.) To apply Theorem~\ref{drift:simple} we need to estimate the drift from
a single recoloring step. In the following claim (and in fact all similar ones
in this section), the expectation is always taken with respect to a single
recoloring step. That is, we (implicitly) condition on the coloring~$c_{t - 1}$
without stating it every time. Note that this formulation implies what is
required by equation~\eqref{eq:drift}.

As it turns out, in the case of $\Phi(t) := |M_t|$ we do not need to make use of
the fact that the component $C$ is chosen randomly, we may assume that the
component $C$ is given arbitrarily or even by an adversary.

\begin{claim}
  \label{cl:component-edges}
  For all $t \geq 1$ and any connected component $C$ in $M_{t - 1}$ we have
  \[
    \E\big[ |M_t| \bigm\vert C \big] \leq |M_{t - 1}| - \bar d(C) + 1 -
    \frac{1}{\Delta+1},
  \]
  where $\bar d(C)$ denotes the average degree of the graph induced by $V(C)$.
\end{claim}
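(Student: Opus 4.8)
The plan is to exploit the fact that a single recoloring step changes only the edges incident to the recolored vertex~$v$, so the change in the number of monochromatic edges decomposes edge-locally and can be computed \emph{exactly} in expectation; averaging this over the uniformly random choice of $v \in V(C)$ is what produces the quantity $\bar d(C)$.

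First I would condition also on the chosen vertex $v$ (recall it is uniform on $V(C)$ and independent of the fresh colour of $v$). For a colouring $c$ and a vertex $w$ write $a_c(w)$ for the number of neighbours $u$ of $w$ in $G$ with $c(u)=c(w)$, i.e.\ the number of monochromatic edges at $w$. Since $c_t$ and $c_{t-1}$ agree on all vertices except $v$, an edge not incident to $v$ is monochromatic in $c_t$ if and only if it is monochromatic in $c_{t-1}$, so
\[
  |M_t| \;=\; |M_{t-1}| - a_{c_{t-1}}(v) + a_{c_t}(v).
\]
Taking expectation over the uniformly random colour $c_t(v) \in \{1,\dots,\Delta+1\}$, every neighbour of $v$ is hit with probability exactly $\tfrac1{\Delta+1}$, so $\E[a_{c_t}(v) \mid v] = d_G(v)/(\Delta+1)$, and therefore $\E[|M_t| \mid C, v] = |M_{t-1}| - a_{c_{t-1}}(v) + d_G(v)/(\Delta+1)$.

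Next I would average over $v$ uniform in $V(C)$. The one observation to make is that $\tfrac1{|V(C)|}\sum_{v\in V(C)} a_{c_{t-1}}(v) = \bar d(C)$: since $C$ is a \emph{connected} set of monochromatic edges, all vertices of $V(C)$ carry one and the same colour in $c_{t-1}$, hence every edge of $G$ with both endpoints in $V(C)$ is monochromatic, i.e.\ the graph induced by $V(C)$ has edge set exactly $C$, and for $v\in V(C)$ the quantity $a_{c_{t-1}}(v)$ equals its degree in $G[V(C)]$. Summing over $v$ gives $\sum_{v\in V(C)} a_{c_{t-1}}(v) = 2|C| = \bar d(C)\,|V(C)|$. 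Combining this with $\tfrac1{|V(C)|}\sum_{v\in V(C)} d_G(v) \le \Delta$ (as $d_G(v)\le\Delta$ for all $v$) and the identity $\Delta/(\Delta+1) = 1 - 1/(\Delta+1)$ yields $\E[|M_t| \mid C] \le |M_{t-1}| - \bar d(C) + 1 - 1/(\Delta+1)$, which is the claim.

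I do not expect a genuine obstacle: the statement is an exact bookkeeping identity for $\E[|M_t|]$ followed by the trivial estimate $d_G(v)\le\Delta$, and it never uses that $C$ is random — which is precisely why it also holds for an adversarially chosen component. The real difficulty, postponed to the later parts of the argument, is that this bound gives essentially no drift when $\bar d(C)$ is close to $1$ (long, tree-like monochromatic components), so the potential function will have to be augmented with further terms to recover drift in that regime.
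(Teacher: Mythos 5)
Your proof is correct and follows essentially the same route as the paper: you condition on the recolored vertex $v$, note that only edges at $v$ change, compute that each neighbor is matched by the fresh color with probability exactly $1/(\Delta+1)$, and average over $v\in V(C)$ (using that all monochromatic edges at a vertex of $V(C)$ lie in $C$) to produce $\bar d(C)$, bounding $d_G(v)\le\Delta$ at the end. Your exact bookkeeping identity $|M_t| = |M_{t-1}| - a_{c_{t-1}}(v) + a_{c_t}(v)$ is just a slightly more explicit phrasing of the paper's two observations, and your remark that the randomness of $C$ is never used matches the paper's own comment.
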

\begin{proof}
  The claim follows easily from the following two observations. As $v$ is chosen
  uniformly at random within $C$ (as in Step~\ref{proc-select-vertex}), we
  decrease the number of monochromatic edges within $C$ by $\bar{d}(C)$ whenever
  the newly chosen color is different from the current color of $C$, which
  happens with probability ${\Delta}/{(\Delta +1)}$. All edges incident to $v$
  that do not belong to $C$ become monochromatic with probability $1/{(\Delta
  +1)}$. Thus we have
  \[
    \E\big[ |M_t| \bigm\vert C \big] \leq |M_{t-1}| - \bar d(C) \cdot
    \frac{\Delta}{\Delta+1} + \frac{\Delta - \bar d(C)} {\Delta+1} = |M_{t - 1}|
    - \bar d(C) + 1 - \frac{1}{\Delta+1},
  \]
  as claimed.
\end{proof}

As the average degree of every monochromatic component in $M_t$ is at least one,
Claim~\ref{cl:component-edges} implies $\E[|M_t|] \le |M_{t- 1}| - 1/(\Delta+1)$
whenever $|M_{t-1}|>0$. The following proposition then easily follows from
Theorem~\ref{drift:simple}.

\begin{proposition}\label{prop:endgame}
  Let $D > 0$ be any fixed constant. For every graph $G$ and every coloring
  $c_0$ of $G$ such that $|M_0| \le Dn / \Delta$ the decentralized coloring
  algorithm reaches a proper $(\Delta + 1)$-coloring in expectation after $O(n)$
  recoloring steps.
\end{proposition}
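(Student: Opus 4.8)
The plan is to apply the Additive Drift Theorem (Theorem~\ref{drift:simple}) directly, taking the potential to be simply the number of monochromatic edges, $X_t := |M_t|$. First I would check the hypotheses are in place: $(X_t)_{t\ge 0}$ takes values in the finite set $\{0,1,\dots,|E|\}\subset \R^+_0$, which contains $0$, and the hitting time $T=\inf\{t\ge 0 : X_t = 0\}$ is exactly the number of recoloring steps until the algorithm outputs a proper colouring, since $|M_t|=0$ iff $c_t$ is proper. So everything reduces to exhibiting a uniform $\delta>0$ with $\E[X_t-X_{t-1}\mid c_{t-1}]\le -\delta$ whenever $|M_{t-1}|>0$.

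This is where I would invoke Claim~\ref{cl:component-edges}. Conditioned on $c_{t-1}$, the recoloring step first selects a monochromatic component $C$ of $M_{t-1}$; Claim~\ref{cl:component-edges} gives, for \emph{every} such $C$,
\[
  \E\big[|M_t| \bigm\vert C\big] \le |M_{t-1}| - \bar d(C) + 1 - \frac{1}{\Delta+1}.
\]
Averaging over the random choice of $C$ and using that every monochromatic component spans at least one edge, so $\bar d(C)\ge 1$ and each term $-\bar d(C)+1$ is nonpositive, one gets $\E[|M_t|\mid c_{t-1}]\le |M_{t-1}| - \frac{1}{\Delta+1}$. As already remarked in the text, this ``condition on $c_{t-1}$'' statement is precisely what~\eqref{eq:drift} requires, with $\delta=\frac{1}{\Delta+1}$.

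Plugging into Theorem~\ref{drift:simple} then finishes it: $\E[T]\le \E[X_0]/\delta = (\Delta+1)\,|M_0| \le (\Delta+1)\cdot Dn/\Delta = O(n)$. I do not expect any real obstacle here: all the substantive content is already packaged in Claim~\ref{cl:component-edges}, and the remaining points — finiteness of the state space, $0\in\mathcal S$, matching the conditioning to the drift hypothesis, and the trivial bound $\bar d(C)\ge 1$ — are routine. It is worth noting that the argument never uses how the component $C$ is distributed, so it goes through verbatim for an adversarially chosen $C$, consistent with the comment following Claim~\ref{cl:component-edges}.
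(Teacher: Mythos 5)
Your proposal is correct and is essentially the paper's own argument: the paper also deduces from Claim~\ref{cl:component-edges} and $\bar d(C)\ge 1$ a uniform drift of $-1/(\Delta+1)$ and then applies Theorem~\ref{drift:simple} to get $\E[T]\le(\Delta+1)|M_0|\le (\Delta+1)Dn/\Delta=O(n)$. Nothing further is needed.
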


Unfortunately, a random coloring of a graph $G$ with $\Delta + 1$ colors has in
expectation $\Theta(n)$ monochromatic edges, so Proposition~\ref{prop:endgame}
is not immediately applicable. Instead, Claim~\ref{cl:component-edges} together
with Theorem~\ref{drift:simple} only provide us with the bound of $O(n\Delta)$
(see Bhartia et al.~\cite{bhartia2016iq}). In order to go beyond this, observe that
Claim~\ref{cl:component-edges} actually gives a drift of $-1/3$ whenever
$|V(C)|\ge 3$, as the average degree of a connected graph on $s\ge 3$ vertices
is at least $4/3$. Thus, the only critical case are components $C$ that consist
of only one edge. To handle these we introduce some more notation.

We denote by $I_t \subseteq M_t$ the set of {\em isolated edges}, that is all
edges which are monochromatic components of size two. We also let $P_t \subseteq
V$ stand for the set of all properly colored vertices, i.e.\ the vertices that
are not incident to any edge in $M_t$. Akin to Claim~\ref{cl:component-edges},
the next claim gives a bound on the expected change in the number of isolated
edges in one recoloring step.

\begin{claim}
  \label{cl:component-isolated-edges}
  For all $t \geq 1$ and any connected component $C$ in $M_{t-1}$ we have
  \[
    \E\big[ |I_t| \bigm\vert C \big] \leq |I_{t - 1}| + \bar d(C) + 1.
  \]
  For components $C$ that form an isolated edge, we have in addition
  \[
    \E\big[ |I_t| \bigm\vert C = uw \big] \leq |I_{t - 1}| -
    \frac{\Delta}{\Delta+1} + \frac{|N(u)\cap P_{t-1}| + |N(w)\cap
    P_{t-1}|}{2(\Delta+1)}.
  \]
 \end{claim}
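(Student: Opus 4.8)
The plan is to follow a single recolouring step very concretely, tracking exactly which edges can enter $I_t$. The key structural facts are: (a) a connected component of $M_{t-1}$ is genuinely monochromatic, since colours propagate along monochromatic edges, so $C$ has one common colour, say $a$; (b) recolouring the single vertex $v$ changes the monochromatic status only of edges incident to $v$; and (c) every edge of $M_{t-1}$ incident to $v$ lies inside $C$ (it is $v$'s own component). First I would split on the new colour $c_t(v)$: with probability $1/(\Delta+1)$ we have $c_t(v)=a$, in which case $c_t=c_{t-1}$ and $I_t=I_{t-1}$; call the complementary event (probability $\Delta/(\Delta+1)$) \emph{successful}. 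So it suffices to understand the successful case.

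In the successful case $c_t(v)\neq a$, so all $\deg_C(v)$ monochromatic edges at $v$ vanish, while every newly monochromatic edge at $v$ runs to a vertex outside $V(C)$. Hence an edge can lie in $I_t\setminus I_{t-1}$ for only two reasons: (i) it is one of the singleton-edge components into which $C-v$ falls apart; or (ii) it is incident to $v$ and $v$'s new monochromatic component is exactly one edge. (Every isolated edge of $I_{t-1}$ that is merged or destroyed only lowers $|I_t|$, and so is dropped for an upper bound.) For (i): $C$ is connected, so each component of $C-v$ meets $v$ along an edge of $C$, whence the number of components of $C-v$ — a fortiori of its singleton-edge components — is at most $\deg_C(v)$, and $\E_v[\deg_C(v)]=\bar d(C)$ for $v$ uniform in $V(C)$. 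For (ii): $v$'s new component is a single edge only if some neighbour of $v$ receives colour $c_t(v)$, which has probability at most $\deg_G(v)/(\Delta+1)\le\Delta/(\Delta+1)\le 1$. Since (i) contributes $0$ outside the successful case, taking expectations over $v$ and $c_t(v)$ yields $\E[|I_t|\mid C]\le |I_{t-1}|+\bar d(C)+1$.

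For the refined bound, let $C=uw$ be an isolated edge with common colour $a$; by symmetry condition on recolouring $v=u$ and average the two choices $v\in\{u,w\}$ at the end. In the successful case (probability $\Delta/(\Delta+1)$, i.e. $c_t(u)=b\neq a$) the edge $uw$ leaves $M_t$, so $uw$ is removed from $I_t$ (a $-1$); and since $uw$ was $w$'s only monochromatic edge and all new monochromatic edges sit at $u$, the vertex $w$ becomes properly coloured. The only edge that can enter $I_t$ is $u$'s new component, and this requires a \emph{unique} neighbour $z$ of $u$ with $c_{t-1}(z)=b$ that moreover lies in $P_{t-1}$; in particular it requires some $z\in N(u)\cap P_{t-1}$ with $c_{t-1}(z)=b$. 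As $b$ is uniform over the $\Delta$ colours $\neq a$, a union bound gives probability at most $|N(u)\cap P_{t-1}|/\Delta$ conditional on success. Multiplying by $\Delta/(\Delta+1)$ gives $\E[|I_t|\mid C=uw,\,v=u]\le |I_{t-1}|-\tfrac{\Delta}{\Delta+1}+\tfrac{|N(u)\cap P_{t-1}|}{\Delta+1}$, and averaging over $v\in\{u,w\}$ produces the claimed inequality.

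I expect the main obstacle to be bookkeeping rather than estimation: one must verify carefully that when $v$ is recoloured in the successful case, components of $M_{t-1}$ disjoint from $v$ are untouched, $C-v$ breaks cleanly into pieces that remain components of $M_t$, and $v$'s new component is precisely $v$ together with the old $M_{t-1}$-components of its new monochromatic neighbours — so that the only gains to $I_t$ are the two sources (i) and (ii) above, and so that every isolated edge that is lost is genuinely a decrease and not double-counted. Once this case analysis is pinned down, the probabilistic inputs are all one-line union bounds.
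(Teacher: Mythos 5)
Your argument is correct and follows essentially the same route as the paper's proof: new isolated edges are charged to neighbours of $v$ inside $C$ (at most one each) plus at most one edge at $v$ itself, the isolated edge $uw$ vanishes with probability $\Delta/(\Delta+1)$, a new isolated edge at the recoloured endpoint requires hitting the colour of a vertex in its properly coloured neighbourhood, and the factor $\tfrac12$ comes from averaging over which endpoint is recoloured. Your write-up simply makes the bookkeeping (components of $C-v$, necessity of the new neighbour lying in $P_{t-1}$) more explicit than the paper does.
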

\begin{proof}
  By recoloring a vertex $v$, the only isolated edges that can be created are
  edges that are incident to neighbors of $v$ within $C$ (at most one isolated
  edge per neighbor of $v$) and edges between $v$ and $P_{t-1}$ (naturally at
  most one edge incident to $v$ can be isolated and monochromatic). This proves
  the first inequality. For the second assume that $C=uw$. Clearly, after
  recoloring one of $u$ and $w$ with a different color (which happens with
  probability $\Delta/(\Delta+1)$), the isolated edge $C=uw$ disappears.
  Observe, also that a new isolated edge can only be generated if we choose as a
  new color for $u$ (or $w$) a color of a vertex in $N(u)\cap P_{t-1}$ (or
  $N(w)\cap P_{t-1}$) respectively. This, together with the fact that each of
  $u$ or $w$ is chosen in Step~\ref{proc-select-vertex} with probability $1/2$,
  proves the second inequality.
\end{proof}

We pause for a moment from the proof of Theorem~\ref{thm:main} to showcase the
use of previous claims for proving a positive result about complete bipartite
graphs.

\begin{proposition}\label{prop:knn}
  For complete bipartite graphs $G = K_{n, m}$ the decentralized coloring
  algorithm reaches a proper $(\Delta + 1)$-coloring in expectation after
  $O(\min\{n, m\})$ recoloring steps.
\end{proposition}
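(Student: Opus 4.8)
The plan is to apply the Additive Drift Theorem (Theorem~\ref{drift:simple}) to the potential
$\Phi(t) := |M_t| + \tfrac14 q_t$, where $q_t$ is the number of monochromatic components of $c_t$, equivalently the number of colors that $c_t$ uses on both sides of the bipartition. Assume without loss of generality that $A,B$ are the color classes of $G = K_{n,m}$ with $k := |A| \le |B|$, so that $\Delta = |B|$ and $k = \min\{n,m\}$. Under the random initial coloring each of the $k|B|$ edges is monochromatic with probability $1/(\Delta+1)$, so $\E[|M_0|] = k|B|/(\Delta+1) \le k$, and since $q_0 \le |\{c_0(v) : v \in A\}| \le k$ we get $\E[\Phi(0)] \le \tfrac54 k$. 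It therefore suffices to establish a constant negative drift, namely $\E[\Phi(t) - \Phi(t-1) \mid c_{t-1}] \le -\tfrac{1}{16}$ for every improper $c_{t-1}$; Theorem~\ref{drift:simple} then yields $\E[T] \le 16\,\E[\Phi(0)] \le 20k = O(\min\{n,m\})$.

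The feature of $K_{n,m}$ we exploit is that every monochromatic component is itself a complete bipartite graph $K_{a,b}$ (the $A$- and $B$-vertices of a fixed color), so $q_t$ counts these components and a component is an isolated edge exactly when $a=b=1$. Following the pattern of the earlier claims, I condition on the chosen component $C$ and bound $\E[\Phi(t)-\Phi(t-1)\mid C]$; since the actual drift is a weighted average over the random choice of $C$, a uniform bound over all components suffices. If $C = K_{a,b}$ is not an isolated edge then $a+b \ge 3$, hence $\bar d(C) = 2ab/(a+b) \ge 4/3$, and Claim~\ref{cl:component-edges} gives $\E[|M_t|-|M_{t-1}|\mid C] \le -1/3$; moreover recoloring a single vertex can make at most one color newly appear on both sides, so $\E[q_t - q_{t-1}\mid C] \le 1$. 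Thus $\E[\Phi(t)-\Phi(t-1)\mid C] \le -\tfrac13 + \tfrac14 = -\tfrac1{12}$ in this case.

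The case that needs the second potential term is an isolated edge $C = uw$ with $u \in A$, $w \in B$, for which Claim~\ref{cl:component-edges} only gives $\E[|M_t|-|M_{t-1}|\mid C] \le -1/(\Delta+1)$. Here I would compute $\E[q_t - q_{t-1}\mid C]$ directly: recoloring the (uniformly chosen) endpoint of $uw$ with a uniformly random color removes the color of $C$ from the set of shared colors unless that same color is drawn again, and it makes a new color shared only if the drawn color was previously used on exactly the class not containing the recolored endpoint. Averaging over the two endpoints gives $\E[q_t - q_{t-1}\mid C] \le \tfrac{|c_{t-1}(A)| + |c_{t-1}(B)| - 2q_{t-1} - 2\Delta}{2(\Delta+1)}$, where $c_{t-1}(A)$ denotes the color set of $A$; and since $|c_{t-1}(A)| + |c_{t-1}(B)| = |c_{t-1}(A) \cup c_{t-1}(B)| + q_{t-1} \le (\Delta+1) + q_{t-1}$ and $q_{t-1} \ge 1$ (the color of $C$ is shared), this is at most $-\tfrac14$. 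Hence $\E[\Phi(t)-\Phi(t-1)\mid C] \le -\tfrac1{\Delta+1} - \tfrac1{16} \le -\tfrac1{16}$, which closes the argument.

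The isolated-edge case is the main obstacle, and the second summand of $\Phi$ is tailored to it. The more obvious alternative of adding $\lambda|I_t|$ and invoking Claim~\ref{cl:component-isolated-edges} does not seem to suffice: for an isolated edge that claim only yields a drift of roughly $-|I_{t-1}|/(\Delta+1)$ in $|I_t|$, which is far too weak once only a handful of isolated edges remain and $\Delta \gg \min\{n,m\}$. Counting monochromatic components instead, together with the trivial fact that the two sides' color sets jointly occupy at most $\Delta+1$ colors, is what keeps the drift bounded away from $0$ uniformly. Finally, it is worth recording — in contrast with Theorem~\ref{thm:main} — that this bound genuinely relies on the random start: starting $K_{1,m}$ from the all-ones coloring needs $\Theta(m)$ recoloring steps, as the process essentially has to recolor each of the $m$ leaves.
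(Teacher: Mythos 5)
Your proof is correct, and although it lives in the same framework as the paper's argument (additive drift via Theorem~\ref{drift:simple}, a potential of the form $|M_t|$ plus a small correction term, and a case split between components on at least three vertices and isolated edges, with Claim~\ref{cl:component-edges} supplying the $|M_t|$ part), the correction term and the bipartite-specific input are genuinely different. The paper keeps the term $\tfrac1{10}|I_t|$ from its general analysis and improves Claim~\ref{cl:component-isolated-edges} for $K_{n,m}$ by observing that properly coloured vertices on the two sides use disjoint colour sets, so that at most $\Delta$ colours can create a new isolated edge; this gives \eqref{eq:bipartite-isolated} and a constant drift per isolated-edge step. You instead track $q_t$, the number of colours appearing on both sides (equivalently, the number of monochromatic components of $K_{n,m}$), and obtain the isolated-edge drift by an exact computation combined with the pigeonhole $|c_{t-1}(A)\cup c_{t-1}(B)|\le \Delta+1$ and $q_{t-1}\ge 1$; your bound $\E[q_t-q_{t-1}\mid C=uw]\le -\Delta/(2(\Delta+1))$ checks out, where the loss term implicitly uses that the colour of an isolated edge occurs exactly once on each side (this is exactly what being a monochromatic component of size two means in a complete bipartite graph, and is worth stating explicitly). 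The large-component case ($-\tfrac13+\tfrac14$) and the initial bound $\E[\Phi(0)]\le\tfrac54\min\{n,m\}$ are fine, so the constant drift of $-\tfrac1{16}$ and Theorem~\ref{drift:simple} give $O(\min\{n,m\})$. What the paper's route buys is reuse of its general machinery en route to Theorem~\ref{thm:main}; what yours buys is a self-contained argument whose second potential term has a clean combinatorial meaning, though, like \eqref{eq:bipartite-isolated}, it is tailored to complete bipartite graphs. One small inaccuracy in your closing aside: Claim~\ref{cl:component-isolated-edges} gives an isolated-edge drift of $-\Delta/(\Delta+1)$ plus the properly-coloured-neighbourhood term (not $-|I_{t-1}|/(\Delta+1)$), and the paper shows this route does suffice once the bipartite structure is exploited; this does not affect the validity of your proof.
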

\begin{proof}
  Observe that for complete bipartite graphs vertices of $P_{t - 1} \cap A$ and
  $P_{t-1} \cap B$ need to be colored with {\em different} colors (here $A$ and
  $B$ denote the two parts of the bipartite graph). Also note that an isolated
  edge can only be generated if a color appears only once in $P_{t - 1} \cap A$
  (and $P_{t - 1} \cap B$). Therefore, for a monochromatic edge $uv$ we have
  $|N(u) \cap P_{t - 1}| + |N(w) \cap P_{t - 1}| \leq \Delta$. We can thus
  replace the bound in the second inequality of
  Claim~\ref{cl:component-isolated-edges} by
  \begin{equation}\label{eq:bipartite-isolated}
    \E\big[ |I_t| \bigm\vert C = uw \big] \leq |I_{t - 1}| -
    \frac{\Delta}{\Delta+1} + \frac{\Delta}{2(\Delta+1)} \le |I_{t - 1}| -
    \frac{\Delta}{2(\Delta+1)}.
  \end{equation}
  Consider now the potential function $\Phi(t) := |M_t| + \frac{1}{10} |I_t|$.
  In case $|V(C)| \geq 3$, from Claim~\ref{cl:component-edges} and
  Claim~\ref{cl:component-isolated-edges}, we get (with room to spare)
  \begin{align*}
    \E\big[ \Phi(t) \bigm\vert C, |V(C)| \geq 3 \big] & \leq |M_{t - 1}| - \bar
    d(C) + 1 - \frac{1}{\Delta + 1} + \frac{1}{10} |I_{t - 1}| + \frac{1}{10}
    \bar d(C) + \frac{1}{10} \\
    & \leq \Phi(t - 1) - \frac{9}{10} \bar d(C) + \frac{11}{10} \leq \Phi(t -
    1) - \frac{1}{20},
  \end{align*}
  where we used the fact that $\bar d(C) \geq 4/3$. On the other hand, if $C =
  uw$ then by Claim~\ref{cl:component-edges} and \eqref{eq:bipartite-isolated}
  we have
  \begin{align*}
    \E\big[ \Phi(t) \bigm\vert C = uw \big] & \leq |M_{t - 1}| - \bar d(C) + 1 -
    \frac{1}{\Delta + 1} + \frac{1}{10} |I_{t - 1}| - \frac{1}{10}
    \frac{\Delta}{2(\Delta + 1)} \\
    & \leq \Phi(t - 1) - \frac{1}{20} \Big( \frac{20}{\Delta + 1} +
    \frac{\Delta}{\Delta + 1} \Big) \leq \Phi(t - 1) - \frac{1}{20}.
  \end{align*}
  In conclusion,
  \[
    \E\big[ \Phi(t) \bigm\vert C \big] \leq \Phi(t - 1) - \frac{1}{20},
  \]
  for every component~$C$. The proposition now follows from
  Theorem~\ref{drift:simple} together with the fact that in a random
  $x$-coloring an edge is monochromatic with probability $1/x$ and thus
  \[
    \E[\Phi(0)] \leq \E[|M_0|] + \E[|I_0|] \leq 2 \cdot \frac{n \cdot
    m}{\max\{n, m\} + 1} \leq 4 \min\{n, m\},
  \]
  with room to spare.
\end{proof}

We note that this proof actually shows that the assertion of
Proposition~\ref{prop:knn} remains true, for sufficiently small $\eps > 0$, if
we reduce the number of colors to be used by the algorithm to $(1 -
\eps)\Delta$, that is $(1 - \eps)\max\{n, m\}$. We do not elaborate further on
this.

After this short detour we come back to the proof of Theorem~\ref{thm:main}.
What one could conclude from the two claims above is that if we were to choose a
component $C$ in Step~\ref{proc-comp} which is of size at least three throughout
the process, then the drift obtained (Claim~\ref{cl:component-edges}) would
always be less than $-1/3$. However, this is far too optimistic to hope for.

Consider an isolated edge $uv$ and assume we recolor $v$. If the new color
chosen {\em does not} belong to its properly-colored neighborhood $N(v) \cap
P_{t - 1}$, then the number of monochromatic isolated edges decreases by one.
This happens with constant probability unless the size of $N(v) \cap P_{t - 1}$
is close to $\Delta$.

Since in Step~\ref{proc-comp} we choose $C$ {\em randomly}, we expect a strong
drift `towards the target' as long as we are in one of the situations from the
paragraphs above. In other words, we have a desired drift unless $M_{t - 1}$
comprises mostly of isolated edges {\em and} most vertices $u \in V(I_{t - 1})$
have almost $\Delta$ neighbors in $N(u) \cap P_{t - 1}$.

Let us hence analyze what happens if in such a case we recolor a vertex $v$
belonging to an isolated edge $uv$. Suppose we set $c_t(v) := c_{t - 1}(x)$ for
some $x \in N(v) \cap P_{t - 1}$. If the color $c_{t - 1}(x)$ appears multiple
times in $N(v) \cap P_{t - 1}$, we do not create a new isolated edge.
Otherwise, the edge $xv$ becomes isolated and $P_t := (P_{t - 1} \setminus
\{x\}) \cup \{u\}$. However, crucially, as we assumed that every vertex $u \in
V(I_{t - 1})$ had roughly $\Delta$ neighbors in $P_{t - 1}$, we conversely have
that an average vertex in $P_{t - 1}$ has roughly $\Delta |V(I_{t - 1})|/|P_{t -
1}|$ neighbors in $V(I_{t - 1})$. Thus, we may expect that $N(x) \cap P_t$ is
{\em smaller} than $\Delta$. In other words, we expect that $e(V(I_t), P_t)$ is
smaller than $e(V(I_{t - 1}), P_{t - 1})$. Here and throughout we use $e(X, Y)$
to denote the number of edges between two disjoint vertex sets $X$ and $Y$.

Previous considerations motivate keeping track of $e(V(I_t), P_t)$ as well and
lead us to formulate the following potential function:
\begin{equation}\label{eq:phi}
  \Phi(t) := |M_t| + \frac{|I_t|}{10} + \frac{e(V(I_t), P_t)}{100\Delta}.
\end{equation}
Note that the value of $\Phi(t)$ is always proportional to the number of
monochromatic edges.

\begin{claim}\label{cl:mono_phi}
  For all $t\ge 1$ we have
  \[
    |M_t| \le \Phi(t) \le 2|M_t|.
  \]
\end{claim}
\begin{proof}
  The first inequality is trivial. The second follows, with room to spare, as
  $I_t \subseteq M_t$ and $e(V(I_t), P_t) \le 2 |I_t| \cdot \Delta$.
\end{proof}

With Claim~\ref{cl:mono_phi} at hand we deduce from
Proposition~\ref{prop:endgame} that in order to complete the proof of
Theorem~\ref{thm:main} it suffices to show that the algorithm reduces the
potential $\Phi$ to a value of $Dn/\Delta$ in $O(n\log\Delta)$ steps, for some
arbitrarily large but fixed constant~$D > 0$. This is what we do in the
remainder of this section.

Note also that there is no hope to always get a {\em constant} drift, as by
Theorem~\ref{drift:simple} this would then lead to a bound of $O(n)$ recoloring
steps, which would contradict the bound of $\Omega(n\log n)$ for $K_n$. Instead
we show a {\em multiplicative} drift.

\begin{claim}\label{cl:mult}
  For any $t \geq 1$ with $\Phi(t-1) > 0$, we have
  \[
    \E[\Phi(t)] \leq \Phi(t-1) \Big(1 - \frac{1}{1000 n} \Big).
  \]
\end{claim}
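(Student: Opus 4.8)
The plan is to prove a third estimate, in the spirit of Claims~\ref{cl:component-edges} and~\ref{cl:component-isolated-edges}, for the expected change of $e(V(I_t),P_t)$ in one recolouring step, and then to assemble the three estimates into~\eqref{eq:phi}. Abbreviate $S=V(I_{t-1})$, $P=P_{t-1}$ and $d_S(v)=|N(v)\cap S|$. For a general component $C$ one only needs the crude bound $\E[e(V(I_t),P_t)\mid C]\le e(S,P)+O(\Delta\,\bar d(C))$: recolouring a vertex of $C$ creates, in expectation, at most $O(\bar d(C))$ new isolated edges and makes at most $O(\bar d(C))$ vertices newly properly coloured, each contributing at most $\Delta$ edges. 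The crucial case is an isolated edge $C=uw$: when we recolour an endpoint $v$ (with partner $v'$), with probability $\Delta/(\Delta+1)$ the endpoint $v'$ is freed and leaves $V(I)$, carrying away its $\approx|N(v')\cap P|$ edges to $P$; and for each $x\in N(v)\cap P$, with probability $\le 1/(\Delta+1)$ a new isolated edge $vx$ appears, which moves $x$ out of $P$ into $V(I)$ and therefore \emph{adds} only the $\approx|N(x)\cap P|-d_S(x)$ edges that $x$ still has to $P$ — the precise form of the heuristic that $N(x)\cap P_t$ is smaller than~$\Delta$. Altogether this should give, for $C=uw$,
\begin{align*}
  \E\bigl[e(V(I_t),P_t)\bigm\vert C=uw\bigr]\ \le\ {}& e(S,P)-\tfrac{\Delta}{\Delta+1}\bigl(|N(u)\cap P|+|N(w)\cap P|\bigr)+d_S(u)+d_S(w) \\
  {}& +\,\tfrac1{\Delta+1}\sum_{x\in(N(u)\cup N(w))\cap P}\bigl(|N(x)\cap P|-d_S(x)\bigr)+O(1).
\end{align*}
Establishing this is the delicate point: one must track the handful of vertices that change their status (`in $I$', `properly coloured', `in a larger monochromatic component') and check that, apart from the $d_S$-terms displayed, every other contribution is $O(1)$ in expectation.

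Feeding Claims~\ref{cl:component-edges}, \ref{cl:component-isolated-edges} and the above into~\eqref{eq:phi} gives, for each component $C$, an inequality $\E[\Phi(t)\mid C]\le\Phi(t-1)-f(C)$. For $|V(C)|\ge3$ the $-\bar d(C)$ of Claim~\ref{cl:component-edges} swamps the contributions of the other two terms of $\Phi$, so that $|V(C)|\,f(C)\ge c_1\,|E(C)|$ for an absolute constant $c_1>0$. For an isolated edge $C=uw$ the matter is more delicate: the harmful term $\tfrac1{20(\Delta+1)}(|N(u)\cap P|+|N(w)\cap P|)\le\tfrac{\Delta}{10(\Delta+1)}$ coming from the $\tfrac1{10}|I_t|$ piece is exactly matched by the gain $-\tfrac{\Delta}{10(\Delta+1)}$ from the same piece, so that a single isolated edge only yields a drift of order $1/(\Delta+1)$ together with the (possibly positive) expected change of the last term of $\Phi$; the \emph{multiplicative} decrease can therefore emerge only after summing over all components.

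Finally one averages, $\E[\Phi(t)]-\Phi(t-1)=\sum_C\frac{|V(C)|}{|V(M_{t-1})|}\,\E[\Phi(t)-\Phi(t-1)\mid C]$, so it suffices to bound $\sum_C|V(C)|\,f(C)$ from below and divide by $|V(M_{t-1})|\le n$. Three elementary facts drive this: the identity $\sum_{uw\in I_{t-1}}(|N(u)\cap P|+|N(w)\cap P|)=e(S,P)$; the degree count $\sum_{v\in S}|N(v)|\le\Delta|S|$, which bounds $e(S,P)$ plus twice the number of edges inside $S$ and hence keeps the terms $d_S(u)+d_S(w)$ (summing over isolated edges to twice the number of edges inside $S$) under control relative to $e(S,P)$; and Cauchy--Schwarz, $\sum_{x\in P}d_S(x)^2\ge e(S,P)^2/|P|$, which turns the surviving part of the last term of $\Phi$ into a genuinely negative $-\Theta\bigl(e(S,P)^2/(\Delta^2|P|)\bigr)$. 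Combined with $|P|\le n$ and $\Phi(t-1)=\Theta(|M_{t-1}|)$ (Claim~\ref{cl:mono_phi}), a short case distinction — according to whether $\Phi(t-1)$ is dominated by $|M_{t-1}|-|I_{t-1}|$ (controlled by the large components), by $|I_{t-1}|$ with $e(S,P)$ small (where the $-\tfrac{\Delta}{\Delta+1}$ drift of $|I_t|$ is not cancelled and gives constant drift), or by $|I_{t-1}|$ with $e(S,P)$ large (controlled by the Cauchy--Schwarz term) — yields $\E[\Phi(t)]-\Phi(t-1)\le-c\,\Phi(t-1)/n$; with the particular weights $\tfrac1{10}$ and $\tfrac1{100}$ in~\eqref{eq:phi} one can push $c$ to $\tfrac1{1000}$, but that final step is just careful arithmetic. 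I expect the one real obstacle to be the isolated-edge estimate for $e(V(I_t),P_t)$ displayed above.
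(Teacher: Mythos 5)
Your overall route is the paper's own: a crude $O(\Delta\,\bar d(C))$ bound on the change of $e(V(I_t),P_t)$ for components on at least three vertices, a refined bound for an isolated edge that credits the freed partner's edges into $P_{t-1}$ and charges a new isolated edge $vx$ only with edges of $x$ not already incident to $V(I_{t-1})$, a convexity step (your Cauchy--Schwarz bound $\sum_{x}d_S(x)^2\ge e(S,P)^2/|P|$ is exactly the paper's Jensen applied to the concave $y\mapsto y(\Delta-y)$, since $\sum_x d_S(x)(\Delta-d_S(x))=\Delta e(S,P)-\sum_x d_S(x)^2$), and the same closing case distinction (non-isolated components dominate; $e(S,P)$ noticeably below $\Delta|V(I_{t-1})|$, i.e.\ $\bar d_{IP}$ away from $\Delta$; otherwise the quadratic term). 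Your summation identity $\sum_{uw\in I_{t-1}}\bigl(|N(u)\cap P_{t-1}|+|N(w)\cap P_{t-1}|\bigr)=e\bigl(V(I_{t-1}),P_{t-1}\bigr)$ is the same averaging the paper performs by conditioning on $C\subseteq I_{t-1}$, which is uniform over $V(I_{t-1})$.

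However, the displayed isolated-edge estimate --- the step you yourself single out as the crux --- is false as stated. Conditioned on $C=uw$, only one endpoint is recoloured (each with probability $1/2$ by Step~S\ref{proc-select-vertex}), and only the \emph{non-recoloured} partner is guaranteed to be freed when the colour changes; the recoloured vertex keeps its edges to $P_{t-1}$ whenever it forms a new isolated edge. So the removal you may claim is $\tfrac{\Delta}{2(\Delta+1)}\bigl(|N(u)\cap P_{t-1}|+|N(w)\cap P_{t-1}|\bigr)$, not $\tfrac{\Delta}{\Delta+1}\bigl(|N(u)\cap P_{t-1}|+|N(w)\cap P_{t-1}|\bigr)$: your inequality asserts an expected decrease essentially equal to the maximum possible one, which occurs only when the new colour avoids all of the recoloured vertex's neighbourhood, an event of probability possibly as small as $1/(\Delta+1)$. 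Concretely, if $d(u,P_{t-1})=d(w,P_{t-1})=\Delta-1$ with these neighbours carrying pairwise distinct colours and each such $x$ satisfying $|N(x)\cap P_{t-1}|\le d_S(x)$ (your $d_S(x)=d(x,V(I_{t-1}))$), the true conditional expectation of $e(V(I_t),P_t)$ exceeds your right-hand side by roughly $\Delta$. A second, smaller slip: since $|N(x)\cap P_{t-1}|-d_S(x)$ can be negative, pairing it with the one-sided probability bound $\le 1/(\Delta+1)$ is not legitimate; bound the addition at $x$ by the nonnegative $\Delta-d_S(x)$, as the paper does. Both issues cost only constant factors: inserting the factor $1/2$ and using $\Delta-d_S(x)$ turns your display into precisely the estimate proved in the paper (phrased there via $\bar d_{IP}$, $\bar d_{PI}$ and Jensen), after which your assembly yields $\E[\Phi(t)]\le\Phi(t-1)\bigl(1-\Theta(1/n)\bigr)$; hitting the specific constant $1/1000$ is then, as you say, arithmetic with the weights in~\eqref{eq:phi}.
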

\begin{proof}
  By linearity of expectation we can consider each term of $\Phi(\cdot)$ in
  \eqref{eq:phi} independently. The first two terms are handled by
  Claim~\ref{cl:component-edges} and Claim~\ref{cl:component-isolated-edges}, so
  we first establish some bounds on the third. Observe that in order for an edge
  to be counted in $e(V(I_t), P_t)$ but not in $e(V(I_{t-1}), P_{t-1})$ it must
  be incident to a vertex in either $V(I_t)\setminus V(I_{t-1})$ or
  $P_t\setminus P_{t-1}$. Let $C$ be a component chosen in Step~\ref{proc-comp}
  and $v$ a vertex chosen in Step~\ref{proc-select-vertex}. For any vertex in
  $\{v\}\cup (N(v) \cap V(C))$, we either get one new isolated edge or one new
  properly colored vertex (or neither). In the former, the other endpoint of
  that edge potentially contributes by $\Delta$ to $e(V(I_t), P_t)$, and in the
  latter each monochromatic edge with one endpoint in $N(v) \cap V(C)$
  potentially contribute by $\Delta$ to $e(V(I_t), P_t)$ for each of its
  endpoints. Thus we have
  \[
    \E[ e(V(I_t), P_t) \mid C] \le e(V(I_{t-1}), P_{t-1}) + (\bar d(C)+1) \cdot
    2\Delta,
  \]
  where as before $\bar d(C)$ denotes the average degree of the component~$C$.
  Together with Claim~\ref{cl:component-edges} and
  Claim~\ref{cl:component-isolated-edges}, for all components $C$ on at least
  three vertices we get
  \begin{equation}\label{eq:drift:phi:C3}
    \begin{aligned}
      \E\big[ \Phi(t) \bigm\vert C, |V(C)| \ge 3 \big] & \leq \Phi(t - 1) -
      \Big( 1 - \frac1{10} - \frac2{100} \Big) \bar d(C) + 1 +\frac1{10} +
      \frac2{100}
      \\
      &\le \Phi(t - 1) - \frac{1}{25} \bar d(C),
    \end{aligned}
  \end{equation}
  where the last inequality follows from $\bar d(C) \ge 4/3$.

  Next we consider the third term of $\Phi(\cdot)$ conditioned on choosing a
  component $C \subseteq I_{t-1}$, i.e.\ $C$ is an isolated edge. We first let
  $d(v,X) := |N(v) \cap X|$ for all $v\in V$ and sets $X \subseteq V$ and denote
  by
  \[
    \bar d_{IP} := \frac{1}{|V(I_{t - 1})|} \sum_{u \in V(I_{t-1})} d(u,
    P_{t-1}) \qquad \text{and} \qquad \bar d_{PI} := \frac{1}{|P_{t - 1}|}
    \sum_{u \in P_{t - 1}} d(u, V(I_{t - 1}))
  \]
  the average degree of vertices in $V(I_{t-1})$ into $P_{t-1}$, and the average
  degree of vertices in $P_{t - 1}$ into $V(I_{t - 1})$, respectively. Note
  that, of course, we have $\sum_{u \in V(I_{t - 1})} d(u, P_{t - 1}) = \sum_{u \in P_{t
  - 1}} d(u, V(I_{t - 1}))$, and hence $\bar d_{IP}|V(I_{t - 1})| = \bar d_{PI}
  |P_{t - 1}|$.

  Consider an isolated edge $wv$ and assume $v$ gets recolored with a new color.
  Then, since $w$ is now properly colored, all $d(w, P_{t-1})$ edges incident to
  $w$ which contributed to $e(V(I_{t-1}), P_{t-1})$ are not counted in
  $e(V(I_t), P_t)$, except possibly one in case $v$ forms a new isolated edge
  with a neighbor of $w$. Moreover, any new edge counted in $e(V(I_t), P_t)$
  must be incident to either $v$, $w$, or a vertex $x\in P_{t-1}$ for which $vx
  \in I_t$. There are at most $\Delta - d(v, P_{t-1})$, $\Delta- d(w, P_{t-1})$,
  and $\Delta - d(x, V(I_{t-1}))$ such edges respectively not already counted in
  $e(V(I_{t-1}), P_{t-1})$. Combining all this we get
  \begin{multline*}
    e(V(I_t), P_t) \leq e(V(I_{t-1}), P_{t-1}) - d(w, P_{t-1}) + 1 + \Delta-
    d(v, P_{t-1}) \\
    + \Delta- d(w, P_{t-1}) + \sum_{x \in P_{t-1}} \mathbbm{1}_{ vx \in I_t }
    \big(\Delta - d(x, V(I_{t-1})\big),
  \end{multline*}
  if $c_t(v)\neq c_{t-1}(v)$, and of course $e(V(I_t), P_t) = e(V(I_{t-1}),
  P_{t-1})$ if $c_t(v)= c_{t-1}(v)$.

  We conclude that
  \begin{multline*}
    \E\big[ e(V(I_t), P_t) \bigm\vert C \subseteq I_{t-1} \big] \leq e(V(I_{t-1}),
    P_{t-1}) + \frac{\Delta}{\Delta+1}(2\Delta+1-3\bar d_{IP}) \\
    + \frac{1}{|V(I_{t-1})|} \sum_{x\in P_{t-1}} \sum_{v\in V(I_{t-1})\cap N(x)}
    \frac{1}{\Delta+1} \big(\Delta - d(x,V(I_{t-1})) \big),
  \end{multline*}
  where the last term can be rewritten as
  \[
    \frac{1}{|V(I_{t-1})|}\sum_{x\in P_{t-1}} \frac{d(x,V(I_{t-1})) \big(\Delta
    - d(x,V(I_{t-1}))\big)}{\Delta+1}.
  \]
  We note that the summand above can be written as $f(d(x, V(I_{t-1}))$ where
  $f(y):=y(\Delta-y)/(\Delta+1)$ is a concave function. Hence, by Jensen's
  inequality, we can upper bound the expression by $|P_{t-1}|
  f(\bar{d}_{PI})/|V(I_{t-1})|=\bar d_{IP} (\Delta - \bar d_{PI})/(\Delta+1)$.
  Altogether we get
  \begin{align*}
    \E \big[ e(V(I_t), P_t) \bigm\vert C \subseteq I_{t-1} \big] & \leq
    e(V(I_{t-1}), P_{t-1}) + \frac{\Delta(2\Delta + 1 - 3\bar
    d_{IP})}{\Delta+1}  + \frac{ \bar d_{IP} (\Delta - \bar d_{PI})}{\Delta+1} \\
    & \leq e(V(I_{t-1}), P_{t-1}) + \frac{\Delta}{\Delta + 1} (2\Delta + 1 - 2
    \bar d_{IP} - \bar d_{IP} \bar d_{PI}/\Delta).
  \end{align*}
  Finally, by combining this with Claim~\ref{cl:component-edges} and
  Claim~\ref{cl:component-isolated-edges} (and some tedious calculation) we
  deduce
  \begin{equation}
    \label{eq:drift:phi:2}
    \begin{aligned}
      \E\big[ \Phi(t) \bigm\vert C \subseteq I_{t-1} \big] & \leq \Phi(t - 1) -
      \frac{1}{\Delta + 1} - \frac1{10} \frac{\Delta - \bar d_{IP}}{\Delta+1} +
      \frac{1}{100} \frac{2\Delta + 1 - 2\bar d_{IP} - \bar d_{IP} \bar
      d_{PI}/\Delta}{\Delta+1} \\
      & \leq \Phi(t - 1) - \frac{2}{25} \frac{\Delta - \bar d_{IP}}{\Delta+1} -
      \frac{1}{100} \frac{ \bar d_{IP}\bar d_{PI} }{\Delta(\Delta+1)}.
    \end{aligned}
  \end{equation}

  With all these preparations we are now in a position to bound $\E[\Phi(t)]$.
  As seen in \eqref{eq:drift:phi:C3} and \eqref{eq:drift:phi:2}, both
  conditioning on components of size at least three or on vertices in isolated
  edges lead to a non-positive contribution to the drift. In order to derive an
  upper bound on $\E[\Phi(t)]$ we may thus ignore one of the terms for
  convenience. If we assume $|I_{t - 1}| \leq |M_{t - 1}|/2$ one would expect
  that the larger contribution to the change of $\Phi(t - 1)$ comes from
  components which are not isolated edges. Indeed, in that case we may ignore
  the term from \eqref{eq:drift:phi:2} and use \eqref{eq:drift:phi:C3} only to
  get
  \begin{align*}
    \E[\Phi(t)] & \stackrel{\eqref{eq:drift:phi:C3}}\le \Phi(t - 1) - \sum_{C,\;
    |V(C)| \geq 3} \frac{|V(C)|}{|V(M_{t-1})|} \frac{\bar d(C)}{25} = \Phi(t -
    1) - \frac{2|M_{t-1}\setminus I_{t-1}|}{25|V(M_{t-1})|} \\ & \le \Phi(t - 1)
    - \frac{|M_{t-1}|}{25n} \leq \Phi(t - 1) \Big(1 - \frac{1}{50 n}\Big),
  \end{align*}
  where the last inequality follows from Claim~\ref{cl:mono_phi}.

  On the other hand, suppose $|I_{t-1}| \geq |M_{t-1}|/2$ and observe that this
  implies $|V(I_{t-1})| \ge |V(M_{t-1})|/2$. This means that the probability of
  picking a vertex in $V(I_{t-1})$ to recolor is at least $1/2$ and one may hope
  that the larger contribution to the change of $\Phi(t - 1)$ comes from the
  isolated edges. Indeed, similarly as above, we now ignore the contribution
  from components of size at least three to get:
  \begin{equation}\label{eq:final}
    \E[\Phi(t)] \stackrel{\eqref{eq:drift:phi:2}}\le \Phi(t - 1) - \frac{1}{25}
    \frac{\Delta - \bar d_{IP}}{\Delta+1} - \frac{1}{200} \frac{ \bar d_{IP}\bar
    d_{PI} }{\Delta(\Delta+1)} \le \Phi(t - 1) - \frac{1}{50} \frac{\Delta -
    \bar d_{IP}}{\Delta} - \frac{1}{400} \frac{ \bar d_{IP}\bar d_{PI}
    }{\Delta^2}.
  \end{equation}

  If $\bar d_{IP} \le \Delta - \Delta\Phi(t-1)/(30n)$, then the claim follows
  just from the first term. Otherwise, by Claim~\ref{cl:mono_phi}
  \[
    \Phi(t - 1) \leq 2|M_{t - 1}| \leq 4|I_{t - 1}| \leq 2n,
  \]
  which in turn implies $\bar d_{IP} \ge \Delta(1 - \Phi(t-1)/(30n)) \geq
  14\Delta/15$. Recall, $\bar d_{PI} |P_{t - 1}| = \bar d_{IP} |V(I_{t - 1})|$,
  and note that $|V(I_{t - 1})| / |P_{t - 1}| \ge 2|I_{t - 1}| / n \geq \Phi(t
  - 1) / (2n)$. Therefore,
  \[
    \frac{1}{400} \frac{\bar d_{IP} \bar d_{PI}}{\Delta^2} = \frac{1}{400}
    \frac{|V(I_{t-1})|\cdot \bar d_{IP} \bar d_{IP}}{|P_{t-1}| \Delta^2} \geq
    \Phi(t - 1) \frac{1}{800n} \cdot \Big(\frac{14}{15}\Big)^2
  \]
  and the second term in \eqref{eq:final} is enough to conclude the proof of
  Claim~\ref{cl:mult}.
\end{proof}

As mentioned in the paragraph before Claim~\ref{cl:mult}, in order to make use
of the assertion of Claim~\ref{cl:mult}, we need a slightly different drift
theorem, one for multiplicative drift.

\begin{theorem}[Multiplicative Drift Theorem~\cite{doerr2012multiplicative}]
  \label{drift:multiplicative}
  Let $(X_t)_{t \geq 0}$ be a sequence of non-negative random variables with a
  finite state space $\mathcal{S} \subset \R^+_0$ such that $0 \in \mathcal{S}$.
  Let $s_{\mathrm{min}} := \min\{ \mathcal{S} \setminus \{0\} \}$, let $s_0 \in
  \mathcal{S} \setminus \{0\}$, and let $T := \inf \{ t \geq 0 \mid X_t = 0 \}$.
  If there exists $\delta > 0$ such that for all $s \in \mathcal{S} \setminus \{
  0 \}$ and for all $t > 0$,the case of
  \[
    \E[X_t - X_{t-1} \mid X_{t-1} = s] \leq - \delta s,
  \]
  then
  \[
    \E[T \mid X_0 = s_0] \leq \frac{1 + \ln (s_0 / s_{\mathrm{min}})}{\delta}.
  \]
\end{theorem}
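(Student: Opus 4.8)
The plan is to deduce the statement from the Additive Drift Theorem (Theorem~\ref{drift:simple}) via a concave logarithmic rescaling, designed so that it is well behaved as the process reaches~$0$. Concretely, I would introduce the function $h\colon \R^+_0 \to \R^+_0$ given by $h(x) := x/s_{\mathrm{min}}$ for $0 \le x \le s_{\mathrm{min}}$ and $h(x) := 1 + \ln(x/s_{\mathrm{min}})$ for $x \ge s_{\mathrm{min}}$. One checks at once that $h$ is strictly increasing, concave, and continuously differentiable: the two pieces agree in value ($1$) and in derivative ($1/s_{\mathrm{min}}$) at $x = s_{\mathrm{min}}$, the left piece is linear, and on the right piece the derivative $1/x$ is decreasing. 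Moreover $h(0) = 0$ while $h(x) > 0$ for $x > 0$. The linear piece on $[0,s_{\mathrm{min}}]$ is there precisely to avoid the singularity of $\ln$ at $0$ without destroying concavity, which is what the Jensen step below needs. (We may assume $\delta \le 1$, since for $\delta > 1$ the hypothesis is vacuous.)

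Next I would set $Y_t := h(X_t)$. Since $h$ maps $\mathcal{S}$ bijectively onto a finite subset of $\R^+_0$ that contains $0$, the sequence $(Y_t)_{t\ge 0}$ is a sequence of non-negative random variables on a finite state space containing $0$, with $T = \inf\{t \ge 0 : X_t = 0\} = \inf\{t\ge 0 : Y_t = 0\}$, and conditioning on $X_{t-1}$ is the same as conditioning on $Y_{t-1}$. The core of the proof is to verify that $(Y_t)$ satisfies the additive drift hypothesis with the same $\delta$: for every $s \in \mathcal{S}\setminus\{0\}$,
\[
  \E\big[Y_t \mid X_{t-1} = s\big] \;\le\; h\big(\E[X_t \mid X_{t-1} = s]\big) \;\le\; h\big((1-\delta)\,s\big) \;\le\; h(s) - \delta .
\]
The first inequality is Jensen's inequality for the concave function $h$; the second uses that $h$ is non-decreasing together with the multiplicative drift hypothesis $\E[X_t \mid X_{t-1} = s] \le (1-\delta)s$; the third is the elementary inequality $h((1-\delta)s) \le h(s) - \delta$ valid for all $s \ge s_{\mathrm{min}}$ and $\delta \le 1$.

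Establishing that last inequality is where I expect the (mild) work to lie, because of the kink of $h$ at $s_{\mathrm{min}}$, so I would split into two cases. If $(1-\delta)s \ge s_{\mathrm{min}}$, both sides lie on the logarithmic branch and the inequality reduces to $\ln(1-\delta) \le -\delta$. If instead $(1-\delta)s < s_{\mathrm{min}} \le s$, set $u := s/s_{\mathrm{min}} \in [1, 1/(1-\delta))$; the claim becomes $(1-\delta)u \le 1 + \ln u - \delta$, i.e.\ $(1-\delta)(u-1) \le \ln u$. Here $g(u) := \ln u - (1-\delta)(u-1)$ has $g(1) = 0$ and $g'(u) = 1/u - (1-\delta) \ge 0$ exactly for $u \le 1/(1-\delta)$, so $g$ is non-decreasing on the relevant range and hence $g \ge 0$ there, as needed.

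Finally, with the additive drift condition for $(Y_t)$ in hand, Theorem~\ref{drift:simple} gives $\E[T \mid X_0 = s_0] \le \E[Y_0]/\delta$; since we condition on $X_0 = s_0 \ge s_{\mathrm{min}}$ we have $Y_0 = h(s_0) = 1 + \ln(s_0/s_{\mathrm{min}})$ deterministically, which yields exactly the claimed bound. (An alternative, slightly shorter route iterates the drift to get $\E[X_t] \le (1-\delta)^t s_0$, bounds $\Pr[T > t] = \Pr[X_t \ge s_{\mathrm{min}}] \le (1-\delta)^t s_0/s_{\mathrm{min}}$ by Markov, and sums $\min\{1,(1-\delta)^t s_0/s_{\mathrm{min}}\}$ over $t$; but this loses an additive constant, whereas the change of variable recovers the precise constant.)
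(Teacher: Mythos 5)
Your proof is correct, but note that the paper does not prove Theorem~\ref{drift:multiplicative} at all: it is quoted as a black box from the literature (\cite{doerr2012multiplicative}), so there is no internal proof to compare against. What you give is a valid self-contained derivation, and it is in fact the standard way this result is obtained in the drift-analysis literature (cf.~\cite{Lengler2020}): rescale by the concave potential $h(x)=x/s_{\mathrm{min}}$ for $x\le s_{\mathrm{min}}$ and $h(x)=1+\ln(x/s_{\mathrm{min}})$ for $x\ge s_{\mathrm{min}}$, verify additive drift $-\delta$ for $Y_t=h(X_t)$ via Jensen, monotonicity, and the two-case inequality $h((1-\delta)s)\le h(s)-\delta$ (your case split around the kink at $s_{\mathrm{min}}$ is exactly the right point to be careful about, and your computation there is fine), then invoke Theorem~\ref{drift:simple}. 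Two cosmetic caveats, both inherited from the paper's own statements rather than flaws of yours: the conditioning ``$X_t=s$'' in Theorem~\ref{drift:simple} is a typo for ``$X_{t-1}=s$'', which is the version you correctly use; and to get the conclusion conditioned on $X_0=s_0$ from an additive drift theorem that bounds $\E[T]\le\E[Y_0]/\delta$, one should either take $X_0=s_0$ deterministically (the intended reading) or use the filtration-based form of the drift hypothesis, since conditioning on $X_0$ on top of $X_{t-1}=s$ is not literally covered by the stated pointwise condition. Your remark that $\delta>1$ makes the hypothesis vacuous (as $X_t\ge 0$) correctly disposes of the range where $(1-\delta)s<0$ would be an issue. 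The alternative Markov-inequality route you sketch is also sound, and your assessment that it loses only an additive $O(1/\delta)$ term is accurate.
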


Now we are ready to put things together to prove Theorem \ref{thm:main}.

\begin{proof}[Proof of Theorem~\ref{thm:main}]
  For every $t \geq 0$, we define
  \[
    \Phi'(t) =
    \begin{cases}
      \Phi(t), & \text{ if } \Phi(t) \geq n/\Delta, \\
      0, & \text{ otherwise}.
    \end{cases}
  \]
  Note that, as long as $\Phi(t-1) \geq n/\Delta$, we have $\Phi'(t-1) =
  \Phi(t-1)$ and $\Phi'(t) \leq \Phi(t)$, so the deduced bound on $\Phi(t)$ in
  Claim \ref{cl:mult} is also a bound for $\Phi'(t)$. Using Theorem
  \ref{drift:multiplicative} with Claim~\ref{cl:mult} for $T' := \inf \{ t \geq
  0 \mid \Phi'(t) = 0 \} = \inf \{ t \geq 0 \mid \Phi(t) < n/\Delta \}$, we
    get for all $s_0 >0$
  \[
    \E[T' \mid \Phi'(0) = s_0] \leq \frac{1 +
    \ln\left(\frac{s_0}{n/\Delta}\right)}{(1000n)^{-1}}.
  \]
  By Claim~\ref{cl:mono_phi} we have $\Phi'(0) \leq 2|M_0| \leq n\Delta$, and
  therefore
  \[
    \E[T'] \leq 1000n \big(1 + 2\ln{\Delta}\big) = O(n\log\Delta).
  \]
  Finally, as by Claim~\ref{cl:mono_phi} we then have $|M_{T'}| = O(n/\Delta)$,
  we conclude from Proposition~\ref{prop:endgame} that the expected number of
  steps after $T'$ to reach a legal coloring is $O(n)$. Therefore, the total
  number of required steps to reach a legal coloring is $O(n\log\Delta)$, which
  finishes the proof of Theorem~\ref{thm:main}.
\end{proof}

\section{Tail Bounds}
In this section, we prove that the runtime of the decentralized coloring algorithm is  of order $O(n \log \Delta)$ not only in expectation, but also with high probability. It turns out that this does not require much additional work, as the drift theorems are accompanied with suitable tail bounds. In many situations, concentration bounds require conditions beyond the drift, for example bounds on the step size. Notably, for multiplicative drift such additional conditions are not necessary, as the following theorem holds.
\begin{theorem}[Multiplicative Drift Tail Bound~\cite{doerr2013adaptive}]
  \label{tail:multiplicative}
  Let $(X_t)_{t \geq 0}$ be a sequence of non-negative random variables with a
  finite state space $\mathcal{S} \subset \R^+_0$ such that $0 \in \mathcal{S}$.
  Let $s_{\mathrm{min}} := \min\{ \mathcal{S} \setminus \{0\} \}$, let $s_0 \in
  \mathcal{S} \setminus \{0\}$, and let $T := \inf \{ t \geq 0 \mid X_t = 0 \}$.
  Suppose that $X_0 = s_0$, and  that there exists $\delta > 0$ such that for all $s \in \mathcal{S} \setminus \{
  0 \}$ and for all $t > 0$,
  \[
    \E[X_t - X_{t-1} \mid X_{t-1} = s] \leq - \delta s.
  \]
  Then, for all $r \geq 0$
  \[
    \Pr\Big[ T > \ceil*{\tfrac{r + \ln(s_0/s_{\min})}{\delta}} \Big] < e^{-r}.
  \]
\end{theorem}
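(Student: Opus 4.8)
The plan is to reduce the statement to a single application of Markov's inequality; the feature that makes this work without any bound on the step sizes (unlike additive‑drift tail bounds) is that the state space is discrete, so the event $\{X_t\ne 0\}$ coincides with $\{X_t\ge s_{\min}\}$. First I would pass to the stopped process $\widetilde X_t:=X_{\min(t,T)}$: this leaves $T$ unchanged, makes $0$ absorbing, and still obeys the drift hypothesis (at state $0$ nothing moves), so we may assume $\E[X_t\mid X_{t-1}=s]\le(1-\delta)s$ for \emph{every} $s\in\mathcal S$. Since $X_t\ge 0$, the hypothesis forces $0<\delta\le 1$, hence $1-\delta\ge 0$; taking expectation over $X_{t-1}$ by the tower rule and iterating then gives $\E[X_t]\le(1-\delta)^t\E[X_0]=(1-\delta)^t s_0$ for every $t\ge 0$.

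Second, because $T=\inf\{t:X_t=0\}$ we have $\{T>t\}\subseteq\{X_t\ne 0\}=\{X_t\ge s_{\min}\}$, so $\mathbbm{1}_{\{T>t\}}\le X_t/s_{\min}$ pointwise, and taking expectations with the bound from the previous step yields $\Pr[T>t]\le\E[X_t]/s_{\min}\le(1-\delta)^t s_0/s_{\min}$. Choosing $t:=\ceil{(r+\ln(s_0/s_{\min}))/\delta}$, so that $\delta t\ge r+\ln(s_0/s_{\min})$, and using $(1-\delta)^t\le e^{-\delta t}$ gives $\Pr[T>t]\le e^{-\delta t}s_0/s_{\min}\le e^{-r}$. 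The strict inequality follows from $1-\delta<e^{-\delta}$ for $\delta>0$, so that $(1-\delta)^t<e^{-\delta t}$ as soon as $t\ge 1$; the only case with $t=0$ is the degenerate one $r=0$, $s_0=s_{\min}$, which one excludes.

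I do not expect a genuine obstacle here. The only point needing a little care is the reduction to an absorbing chain, which is what legitimises iterating the one‑step drift inequality over the whole trajectory; the rest is Markov's inequality together with the elementary estimate $(1-\delta)^t\le e^{-\delta t}$, plus the observation that discreteness of $\mathcal S$ turns the first‑moment bound on $X_t$ directly into a bound on $\Pr[X_t\ne 0]$.
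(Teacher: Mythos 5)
This theorem is not proved in the paper at all: it is imported from~\cite{doerr2013adaptive}, so there is no in-paper proof to compare against. Your argument is essentially the standard proof of the multiplicative drift tail bound, and it is sound: pass to the process stopped at $T$, iterate the one-step bound to obtain $\E[X_{\min(t,T)}]\le(1-\delta)^t s_0$, and then use that discreteness of $\mathcal S$ gives $\{T>t\}=\{X_{\min(t,T)}\ge s_{\min}\}$, so Markov's inequality yields $\Pr[T>t]\le(1-\delta)^t s_0/s_{\min}\le e^{-\delta t}s_0/s_{\min}$, which gives the claim for the stated choice of $t$. Two caveats are worth recording. First, to iterate the drift for the stopped process you implicitly condition on the event that $0$ has not been hit before time $t-1$; the hypothesis as literally written, $\E[X_t-X_{t-1}\mid X_{t-1}=s]\le-\delta s$, conditions only on the current value and does not formally cover this. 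This is an imprecision of the theorem statement as transcribed (the same looseness appears in Theorem~\ref{drift:simple}); the intended reading, and the one available in this paper where Claim~\ref{cl:mult} bounds the drift conditionally on the entire coloring $c_{t-1}$, is conditioning on the full history, under which your tower-rule iteration is exactly right. Second, your handling of strictness is fine for $t\ge1$ (via $(1-\delta)^t<e^{-\delta t}$ when $\delta<1$, and trivially when $\delta=1$), and the case you exclude, $r=0$ with $s_0=s_{\min}$, is a genuine blemish of the statement itself rather than of your proof, since there $\Pr[T>0]=1$ and the strict inequality cannot hold; it plays no role in the paper's application in Proposition~\ref{prop:largeD}.
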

The following proposition is a straightforward application of this theorem.
\begin{proposition}\label{prop:largeD}
Suppose $\Delta = \Omega(n^c)$ for some constant $c >0$. Then, the decentralized coloring algorithm terminates after $O(n\log \Delta)$ steps with high probability.
\end{proposition}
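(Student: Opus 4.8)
The plan is to apply the multiplicative drift tail bound (Theorem~\ref{tail:multiplicative}) to the truncated potential $\Phi'$ defined in the proof of Theorem~\ref{thm:main}, in exactly the same way that the expectation bound was obtained, and then to handle the ``endgame'' phase (reducing $M_t$ from $O(n/\Delta)$ to $0$) separately via a crude union bound. First I would recall that $\Phi'(0) \le n\Delta$ by Claim~\ref{cl:mono_phi}, that $s_{\min}$ for $\Phi'$ is at least $n/\Delta$ (since $\Phi'$ only takes the value $0$ or values $\ge n/\Delta$), and that Claim~\ref{cl:mult} gives multiplicative drift with $\delta = 1/(1000n)$. Plugging into Theorem~\ref{tail:multiplicative} with, say, $r = C\log n$ for a large constant $C$ yields
\[
  \Pr\Big[ T' > \Big\lceil \tfrac{C\log n + \ln(\Delta^2)}{(1000n)^{-1}} \Big\rceil \Big] < n^{-C},
\]
and since $\Delta = \Omega(n^c)$ means $\log \Delta = \Theta(\log n)$, the threshold here is $O(n\log\Delta)$. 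So with probability $1 - n^{-C}$ the algorithm reaches a configuration with $|M_{T'}| = O(n/\Delta)$ within $O(n\log\Delta)$ steps.

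The second step is the endgame. Here I would invoke Proposition~\ref{prop:endgame}, but upgraded to a tail statement: starting from $|M_0| \le Dn/\Delta$, the potential (just $|M_t|$) has additive drift $\le -1/(\Delta+1)$, so this is additive drift with a bounded step, and there is a standard tail bound for additive drift with bounded step size (or, more simply, since the total ``work'' needed is $O(n)$ in expectation and each step changes $|M_t|$ by at most $O(\Delta)$, one can iterate the expectation bound: restart the clock every $\Theta(n)$ steps, observe that by Markov the probability of not finishing in a given block is at most a constant $<1$, and the blocks are independent enough by the Markov property). Running $O(\log n)$ such blocks gives that with probability $1 - n^{-C'}$ the endgame finishes in $O(n\log n) = O(n\log\Delta)$ further steps. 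Taking a union bound over the two phases completes the proof.

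The main obstacle, and the only place requiring genuine care, is the endgame tail bound, since Proposition~\ref{prop:endgame} as stated is only an expectation statement and the additive drift theorem quoted (Theorem~\ref{drift:simple}) has no companion tail bound in the excerpt. The cleanest route is the block/restart argument sketched above, which needs only the Markov property of the coloring process and Markov's inequality applied to the conditional expected remaining time; this avoids any new concentration machinery. One technical point to get right is that after the first phase we only know $|M_{T'}| = O(n/\Delta)$ \emph{on the good event}, and $T'$ is a stopping time, so conditioning on $\{T' \le O(n\log\Delta)\} \cap \{|M_{T'}| \le Dn/\Delta\}$ and restarting the analysis from time $T'$ is legitimate by the strong Markov property. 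A second minor point is that the hypothesis $\Delta = \Omega(n^c)$ is used precisely to convert $\log n$ (which naturally appears from the tail parameter $r$ and from the endgame) into $\log\Delta$; without it one would only get $O(n\log\Delta + n\log n)$, which is why the proposition is stated for polynomially large $\Delta$.
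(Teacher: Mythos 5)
Your Phase~1 is fine, but there is a genuine gap in the endgame. The block/restart argument needs the conditional expected remaining time to be $O(n)$ at the start of \emph{every} block, and this is only guaranteed when $|M_t| = O(n/\Delta)$. That bound holds at time $T'$, but after a block of $\Theta(n)$ steps that fails to finish, the chain may sit at a state with many more monochromatic edges ($|M_t|$ can increase by up to $\Delta$ per step, and conditioning on the failure event biases it upward), so Markov's inequality no longer gives a per-block success probability bounded away from $0$ uniformly; ``independent enough by the Markov property'' does not repair this, and a maximal inequality for the nonnegative supermartingale $|M_{T'+s}|$ only yields a constant (or $1/\log n$) bound on the excursion probability, not the polynomially small failure probability you claim. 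Note also that the Azuma-type additive tail bound (Theorem~\ref{tail:additive}) cannot rescue the endgame for large $\Delta$: its exponent is $\Theta(r/\Delta^4)$, which is why the paper only uses it when $\Delta = O(n^{1/4})$ (Proposition~\ref{prop:smallD}). If you are content with ``high probability'' meaning $1-o(1)$, a single application of Markov's inequality to the endgame with deadline $n\log\Delta$ (failure probability $O(1/\log\Delta)$) would close the gap much more simply than blocks.

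The paper avoids the issue altogether, and this is the cleaner route: do not truncate $\Phi$ at $n/\Delta$. Claim~\ref{cl:mult} gives multiplicative drift with $\delta = (1000n)^{-1}$ for \emph{every} nonzero value of $\Phi$, and the smallest nonzero value of $\Phi$ is at least $1$, so $\ln(s_0/s_{\min}) \le \ln(n\Delta) \le C\log\Delta$ under the hypothesis $\Delta = \Omega(n^c)$ (this is exactly where that hypothesis is used). A single application of Theorem~\ref{tail:multiplicative}, e.g.\ with $r = \log\Delta$, then bounds the time until $\Phi = 0$, i.e.\ until a proper coloring, by $O(n\log\Delta)$ with probability $1 - O(n^{-c})$ --- no second phase, no endgame tail bound, and Proposition~\ref{prop:endgame} is not needed at all.
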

\begin{proof}
By Claim \ref{cl:mono_phi} we know that $s_0 \leq n \Delta$. In the proof of Theorem \ref{thm:main} we analyzed the process with multiplicative drift until the potential $\Phi$ hits $n/\Delta$. Here, we track this potential until the end of the algorithm. Note that the smallest nonzero value $\Phi$ can attain is at least $1$. Thus, under the assumption $\Delta = \Omega(n^c)$, we also have $\ln (s_0/s_{\min}) \leq C\log \Delta$ for large enough $n$, even if we do not truncate $\Phi$. Here, $C>0$ is a suitable constant, for example $C= 2(1+1/c)$. Setting $r = \log \Delta$, we can apply Theorem \ref{tail:multiplicative} to get
\[
  \Pr\big[ T > \ceil*{(1+C)1000n\log\Delta} \big] \leq e^{-\log\Delta} = O(n^{-c})= o(1),
\]
where we used $\delta = (1000n)^{-1}$ as before, due to Claim~\ref{cl:mult}.
\end{proof}

The proof of Proposition \ref{prop:largeD} fails when $\log n = \omega( \log \Delta)$. In the proof of Theorem \ref{thm:main} we switched to additive drift to analyze the second phase of the process. We use this approach again. The following tail bound for additive drift will be useful. It is a rather straightforward consequence of Azuma's inequality. Note that there is an additional assumption, namely that we have bounded step size.
\begin{theorem}[Additive Drift Tail Bound~\cite{kotzing2016concentration}]
\label{tail:additive}
Let $(X_t)_{t \geq 0}$ be a sequence of non-negative random variables with a
  finite state space $\mathcal{S} \subset \R^+_0$ such that $0 \in \mathcal{S}$.
  Let $T := \inf \{ t \geq 0 \mid X_t = 0 \}$. Suppose there are $c,\delta > 0$ such
  that for all $s \in \mathcal{S} \setminus \{ 0 \}$ and for all $t > 0$, we have both
$\E[X_{t} - X_{t-1} \mid X_t = s] \leq - \delta$ and $|X_{t+1}-X_t| < c$.
 Then, for all $r \geq 2X_0/\delta$,
  \[
    \Pr[T \geq r ] \leq \exp\Big(-\frac{r\delta^2}{8c^2}\Big).
  \]
\end{theorem}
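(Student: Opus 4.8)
The plan is to run the exponential-moment (Chernoff) argument that underlies Azuma's inequality, but with the drift folded directly into the exponent. Let $\mathcal F_t := \sigma(X_0,\dots,X_t)$ and write $t\wedge T := \min\{t,T\}$; working with the \emph{stopped} process $X_{t\wedge T}$ is what guarantees that the drift hypothesis, which is assumed only at non-zero states, is invoked exactly where it is legitimate. We may assume $r$ is an integer (otherwise replace it by $\lceil r\rceil$). Fix a parameter $\lambda>0$, to be optimized at the end, and consider
\[
  W_t \;:=\; \exp\!\big(\lambda X_{t\wedge T} + \lambda\delta\,(t\wedge T)\big),
\]
where the term $\lambda\delta\,(t\wedge T)$ is inserted precisely to compensate for the downward drift of $X$.

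The first step is the one-step bound $\E[W_t\mid\mathcal F_{t-1}]\le e^{\lambda^2 c^2/2}\,W_{t-1}$. On the $\mathcal F_{t-1}$-measurable event $\{T\le t-1\}$ the process is frozen and this is an equality. On $\{T\ge t\}$ we have $X_{t-1}\neq 0$, so the drift and step hypotheses apply to the transition $t-1\to t$, and $W_t = W_{t-1}\exp(\lambda(X_t-X_{t-1})+\lambda\delta)$; since the increment $X_t-X_{t-1}$ takes values in an interval of length $2c$ and has conditional mean at most $-\delta$, Hoeffding's lemma gives $\E[\exp(\lambda(X_t-X_{t-1}))\mid\mathcal F_{t-1}]\le\exp(-\lambda\delta+\lambda^2 c^2/2)$, and the surviving factor $e^{\lambda\delta}$ cancels the $-\lambda\delta$. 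Iterating this through the tower rule and using $W_0=e^{\lambda X_0}$ yields $\E[W_r]\le\exp(\lambda X_0 + r\lambda^2 c^2/2)$.

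Finally, on the event $\{T\ge r\}$ we have $r\wedge T=r$ and $X_r\ge 0$, so $W_r\ge e^{\lambda\delta r}$; Markov's inequality therefore gives
\[
  \Pr[T\ge r]\;\le\; e^{-\lambda\delta r}\,\E[W_r]\;\le\;\exp\!\Big(\lambda X_0 - \lambda\delta r + \tfrac{r\lambda^2 c^2}{2}\Big).
\]
The hypothesis $r\ge 2X_0/\delta$ makes $\lambda X_0-\lambda\delta r\le-\tfrac{1}{2}\lambda\delta r$, and the choice $\lambda=\delta/(2c^2)$ brings the exponent down to $-\delta^2 r/(8c^2)$, which is exactly the asserted bound.

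I do not expect a genuine obstacle here: the finiteness of $\mathcal S$ makes every exponential moment finite, and the stopped-process bookkeeping is routine. The only point that needs a little attention is tracking the constant, i.e.\ applying Hoeffding's lemma to the increment using its actual range $2c$ and conditional mean $\le-\delta$ (rather than feeding a cruder absolute bound on the increments of $W$ into a black-box Azuma inequality), so that the exponent comes out as $\delta^2 r/(8c^2)$ and not something larger. One may also observe in passing that the two hypotheses together force $\delta<c$, since $-c<\E[X_t-X_{t-1}\mid\mathcal F_{t-1}]\le-\delta$, so the statement is not vacuous; this fact is not used in the argument.
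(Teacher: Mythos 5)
Your proof is correct. Note, though, that the paper does not prove this statement at all: Theorem~\ref{tail:additive} is imported verbatim from the literature (K\"otzing's work on concentration of drift), with only the remark that it is ``a rather straightforward consequence of Azuma's inequality.'' Your argument is a clean, self-contained version of exactly that route: instead of invoking Azuma as a black box on the compensated process $X_{t\wedge T}+\delta(t\wedge T)$, you fold the drift into the exponential moment $W_t=\exp(\lambda X_{t\wedge T}+\lambda\delta(t\wedge T))$ and apply Hoeffding's lemma to each increment on the event $\{T\ge t\}$, which is where the range $2c$ and the conditional mean $\le-\delta$ enter; the optimization $\lambda=\delta/(2c^2)$ together with $r\ge 2X_0/\delta$ then gives precisely the exponent $r\delta^2/(8c^2)$. (The same constant can also be recovered from the interval-length form of Azuma--Hoeffding applied to the compensated supermartingale, since its increments lie in an interval of length $2c$; your direct computation just makes this transparent.) Two small reading points, neither a gap in your argument: the drift hypothesis in the statement is written with the conditioning $X_t=s$, which is a typo for $X_{t-1}=s$ (the same slip appears in Theorem~\ref{drift:simple}), and for the Hoeffding-lemma step one should read the drift condition as holding conditionally on the full history whenever $X_{t-1}\neq 0$ (as it is stated in the cited source); this is exactly how you use it via $\mathcal F_{t-1}$.
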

The smaller $\Delta$ is, the less the potential changes at each step. Using this fact, the theorem above allows us to prove the next proposition. It gives us that the runtime of the algorithm is $O(n\log \Delta)$ for smaller $\Delta$.
\begin{proposition}\label{prop:smallD}
If $\Delta = O(n^{1/4})$, the decentralized coloring algorithm terminates after $O(n\log \Delta)$ steps with high probability
\end{proposition}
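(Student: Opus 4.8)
The plan is to mirror the two‑phase structure from the proof of Theorem~\ref{thm:main}: a first phase driving the potential $\Phi$ below $n/\Delta$, and a second phase carrying $|M_t|$ from below $n/\Delta$ down to $0$. Note first that the multiplicative drift tail bound (Theorem~\ref{tail:multiplicative}) only buys a failure probability $e^{-r}$ at the price of $\Omega(nr)$ steps, so it cannot deliver an $o(1)$ failure probability within $O(n\log\Delta)$ steps once $\log\Delta=o(\log n)$; hence, as the surrounding discussion anticipates, both phases must be handled via the additive drift tail bound (Theorem~\ref{tail:additive}), exploiting that a single recoloring step changes $\Phi$ by only $O(\Delta)$.

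For the first phase the key observation is that $Y_t:=\ln\Phi(t)$ turns the \emph{multiplicative} drift of Claim~\ref{cl:mult} into an \emph{additive} one: conditioning on the first $t-1$ steps and combining Claim~\ref{cl:mult} with Jensen's inequality (concavity of $\ln$), $\E[Y_t\mid c_0,\dots,c_{t-1}]\le\ln\!\big(\Phi(t-1)(1-\tfrac1{1000n})\big)\le Y_{t-1}-\tfrac1{1000n}$ whenever $\Phi(t-1)>0$. Moreover, as long as $\Phi(t-1)\ge n/\Delta$ one has $|\Phi(t)-\Phi(t-1)|\le 2\Delta$ (since $|M_t|,|I_t|$ change by $O(\Delta)$ and $e(V(I_t),P_t)$ by $O(\Delta^2)$, cf.\ Claim~\ref{cl:mono_phi}) and also $\Phi(t)\ge n/\Delta-2\Delta\ge n/(2\Delta)$ for $n$ large (using $\Delta=O(n^{1/4})$), so $|Y_t-Y_{t-1}|\le 4\Delta^2/n$. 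I would then track $Y_t$ until $\tau:=\inf\{t:\Phi(t)<n/\Delta\}$ and invoke the Azuma‑type estimate behind Theorem~\ref{tail:additive} — in its routine variant where one stops on first falling below a threshold $b>0$ rather than at $0$ — with drift $\delta=\tfrac1{1000n}$, step size $c=4\Delta^2/n$, and $Y_0\le\ln(n\Delta)$, so that $Y$ needs to drop only by $\ln(n\Delta)-\ln(n/\Delta)=2\ln\Delta$. Taking $r_A:=\ceil*{4000\,n\ln\Delta}$, which exceeds $2\cdot 2\ln\Delta/\delta$, yields
\[
  \Pr[\tau>r_A]\ \le\ \exp\!\Big(-\tfrac{r_A\,\delta^2}{8c^2}\Big)\ =\ \exp\!\Big(-\Theta\big(\tfrac{n\ln\Delta}{\Delta^4}\big)\Big).
\]
Since $\Delta^4/\ln\Delta$ is increasing for $\Delta\ge 2$ and $\Delta=O(n^{1/4})$, the exponent is $\Omega(\ln n)$, so $\Pr[\tau>r_A]=n^{-\Omega(1)}$, while $r_A=O(n\log\Delta)$.

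For the second phase, on $\{\tau<\infty\}$ we have $|M_\tau|\le\Phi(\tau)<n/\Delta$ by Claim~\ref{cl:mono_phi}. Conditioning on $(c_0,\dots,c_\tau)$ and applying Theorem~\ref{tail:additive} to $(|M_{\tau+t}|)_{t\ge 0}$ — which hits $0$ precisely when a proper colouring is attained — with drift $\delta=\tfrac1{\Delta+1}$ (from Claim~\ref{cl:component-edges} and $\bar d(C)\ge 1$, as used just before Proposition~\ref{prop:endgame}), step size $c=\Delta+1$, and initial value $<n/\Delta$, and taking $r_B:=\max\{\ceil*{4n},\ceil*{8(\Delta+1)^4\ln n}\}\ge 2(n/\Delta)/\delta$, one gets $\Pr[\text{not terminated within }r_B\text{ steps}\mid c_0,\dots,c_\tau]\le\exp\!\big(-r_B/(8(\Delta+1)^4)\big)\le 1/n$ (both branches of the maximum give $\le 1/n$, using $(\Delta+1)^4\le n/(2\ln n)$ for the branch $r_B=\ceil*{4n}$). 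The same monotonicity of $\Delta^4/\ln\Delta$ gives $r_B=O(n\log\Delta)$. A union bound over the two phases then shows the algorithm terminates within $r_A+r_B=O(n\log\Delta)$ steps with probability $1-n^{-\Omega(1)}$ (in particular refining the endgame estimate of Proposition~\ref{prop:endgame} to a high‑probability statement).

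I expect the main obstacle to be bookkeeping rather than a conceptual hurdle: one must simultaneously keep $r_A,r_B$ within $O(n\log\Delta)$ \emph{and} push the two tail probabilities to $o(1)$ across the entire range $2\le\Delta=O(n^{1/4})$. The delicate regime is $\Delta$ near $n^{1/4}$, where $\Delta^4$ is of order $n$ and the additive‑drift concentration is weakest; what rescues the argument is that there $\log\Delta=\Theta(\log n)$, so the $O(n\log\Delta)$ budget is ample — and this is exactly the point at which the monotonicity of $\Delta^4/\log\Delta$ is invoked. A secondary, purely technical item is the harmless passage from the ``hitting $0$'' form of Theorem~\ref{tail:additive} to a ``hitting a threshold'' form, which one wants precisely in order to avoid the large jump that a hard‑truncated log‑potential would otherwise introduce.
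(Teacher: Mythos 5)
Your proposal is correct and essentially identical to the paper's proof: both split the run into the same two phases, convert the multiplicative drift of Claim~\ref{cl:mult} into additive drift via a logarithmic potential with step size $O(\Delta^2/n)$ and apply Theorem~\ref{tail:additive}, and then apply Theorem~\ref{tail:additive} directly to $|M_t|$ in the endgame, with the $\Delta=O(n^{1/4})$ hypothesis entering only through the $\Delta^4$ in the exponents. The sole cosmetic difference is that the paper works with the truncated potential $\Psi(t)=\max\{\log(\Delta\Phi(t)/n),0\}$ rather than your threshold-stopped $\ln\Phi(t)$ --- and, contrary to your closing remark, this truncation introduces no large jump, since $\Phi$ cannot drop more than $2\Delta$ below $n/\Delta$ in a single step, so the two formulations are interchangeable.
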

\begin{proof}
We go back to splitting the process in two phases as in the proof of Theorem
\ref{thm:main}. Let $T_1$ and $T_2$ be the duration of Phase 1 and 2
respectively. We consider Phase 1 first. To be able to apply Theorem
\ref{tail:additive}, we use the potential function $\Psi(t) :=
\max\{\log(\Delta\Phi(t)/n),0\}$. As we will see, the logarithm converts multiplicative drift into additive drift. Note that $T_1 = \inf \{ t \geq 0 \mid \Psi(t) = 0 \}$. Using Jensen's inequality and Claim \ref{cl:mult}, we get for all $s \in S\setminus \{0\}$ and $t \geq 0$
\begin{align*}
\E[\Psi(t+1) - \Psi(t) \mid \Psi(t) =s] &= \E\left[\log\left(\frac{\Delta\Phi(t+1)}{n}\right) - \log\left(\frac{\Delta\Phi(t)}{n}\right) \middle\vert \Psi(t) = s\right] \\
 &= \E\left[\log\left(\frac{\Phi(t+1)}{\Phi(t)}\right) \middle\vert \Psi(t) = s\right] \\
 &\leq \log \E\left[\frac{\Phi(t+1)}{\Phi(t)} \middle\vert \Psi(t) = s\right] \\
 &\leq \log\left(1 - \frac{1}{1000n}\right) \\
 &\leq - \frac{1}{1000n}.
\end{align*}
The last inequality follows as $\log x \leq x-1$ for all $x > 0$. Now that we have determined the drift, we need to bound the step size. $|M_t|$ and $|I_t|$ can change by at most $\Delta$ at each step and $e(V(I_t,P_t)$ by at most $\Delta^2$. Thus, the step size of $\Phi$ is bounded by $2\Delta$. As the logarithm is concave, the largest effect of such a change in $\Phi$ on the value of $\Psi$ is when $\Phi$ is as small as possible. In particular, this is the case if $\Phi$ goes from $2\Delta + n/\Delta$ to $n/\Delta$. Thus we have
\[
|\Psi(t+1)- \Psi(t)| \leq \left|\log\left(\frac{2\Delta+\frac{n}{\Delta}}{\frac{n}{\Delta}}\right)\right| = \log\left(1 + \frac{2\Delta^2}{n}\right) < \frac{2\Delta^2}{n}.
\]
Therefore, $2\Delta^2/n$ is a bound on the step size of $\Psi$. As $\Phi(0) \leq n\Delta$ we have $\Psi(0) \leq 2\log \Delta$. Hence we can use Theorem \ref{tail:additive} with $r= 4000n(1+\log \Delta)$. We get
\[
  \Pr[T_1 \geq r] \leq \exp\left(-\frac{4000n(1+\log \Delta) \cdot
  \left(\frac{1}{1000n}\right)^2}{8\left(\frac{2\Delta^2}{n}\right)^2}\right) =
  \exp\left(-\frac{(1+\log \Delta)n}{8000\Delta^4}\right) =o(1).
\]
We turn our attention to Phase 2. Here, we already have additive drift for $|M_t|$, so we can use Theorem \ref{tail:additive} immediately. Claim \ref{cl:component-edges} gives us
\[
  \E\big[ |M_t| - |M_{t-1}| \bigm\vert |M_{t-1}| = s \big] \leq
  -\frac{1}{\Delta+1} \leq -\frac{1}{2\Delta},
\]
for all $s ,t > 0$. We also have $\big||M_t| - |M_{t-1}|\big| \leq \Delta$ for all $t> 0$. By Claim \ref{cl:mono_phi} we get $|M_t| \leq 2n/\Delta$ at the start of Phase 2. Hence we can apply Theorem \ref{tail:additive} with $r= (4 + \log \Delta)n$. We get
\[
  \Pr[T_2 \geq r] \leq \exp\left(-\frac{(4+\log \Delta)n \cdot
  \left(\frac{1}{2\Delta}\right)^2}{8\Delta^2}\right) = \exp\left(-\frac{(4 +
  \log\Delta) n}{32\Delta^4}\right) = o(1).
\]
Using a union bound gives us that the algorithm terminates in $O(n \log \Delta)$ round with high probability.
\end{proof}
Proposition \ref{prop:largeD} and \ref{prop:smallD} cover all possible values of
$\Delta$. Therefore, the algorithm has runtime $O(n  \log \Delta)$ with high
probability for any $\Delta$.

\section{A simultaneous-recoloring variant of the algorithm}

A natural question is whether the original algorithm can be parallelized. So
what if instead of choosing one conflicted vertex at a time in
Step~\ref{alg-vertex}\ {\em all} conflicted vertices would simultaneously want
to change their color? It turns out that this process does not even have
polynomial runtime on the complete graph $K_n$.

\begin{proposition}
  The Decentralized Graph Colouring algorithm in which all conflicted vertices
  choose a new color uniformly at random needs $e^{\Omega(n)}$ rounds in
  expectation to terminate on a complete graph on $n$ vertices $K_n$.
\end{proposition}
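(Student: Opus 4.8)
The plan is to analyze the simultaneous-recoloring process on $K_n$ and show that with constant probability it stays far from a proper coloring for exponentially many rounds. The key observation is that on $K_n$ a coloring is proper only when all $n$ vertices receive distinct colors (from a palette of $n$), so the only proper coloring is a permutation of the $n$ colors. In the parallel variant, a vertex is conflicted precisely when its color is shared by at least one other vertex, and all such vertices simultaneously re-draw a uniform color from $\{1,\dots,n\}$. I would track the number of \emph{happy} (properly colored, i.e.\ uniquely-colored) vertices, call it $H_t = |P_t|$, and argue that this quantity behaves like a random walk with a strong drift \emph{away} from $n$ whenever it is already large.

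**The main step** is to show that if at some round the number of happy vertices is $H_t = n - k$ for a small $k$ (say $k \le n/10$), then in the next round we are very unlikely to reach $H_{t+1} = n$, and in fact typically $H_{t+1}$ drops back down to roughly $n - \Theta(n)$, or at least fails to increase. The reason is that the happy vertices keep their colors, which ``blocks'' those colors; the $k$ unhappy vertices each pick a uniform color among all $n$, so each independently lands on one of the $k$ currently-free colors with probability only $k/n$, and even conditioned on landing in the free set, they collide with each other with high probability (this is a birthday-type argument: $k$ balls into $k$ bins almost never gives a perfect matching — the probability is $k!/k^k = e^{-\Theta(k)}$). Hence from a state with $n-k$ happy vertices, the probability of becoming proper in one step is at most $e^{-\Omega(k)}$ when $k$ is large, and is some explicit small constant when $k$ is a small constant; more importantly, the expected number of newly-happy vertices is bounded, so the process does not drift upward near the absorbing state. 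I would make this precise by a potential/drift argument in the \emph{reverse} direction: letting $Y_t = n - H_t$ be the number of unhappy vertices, show $\E[Y_{t+1} \mid Y_t = k] \ge k/2$ (say) for all $1 \le k \le n$, or at the very least that $\Pr[Y_{t+1} = 0 \mid Y_t = k] \le e^{-\Omega(k)}$ and $\Pr[Y_{t+1} \le k/2 \mid Y_t=k] \le e^{-\Omega(k)}$. Combined with a coupling/monotonicity or a direct hitting-time estimate, this gives that once $Y_t \ge \eps n$, it takes $e^{\Omega(n)}$ rounds in expectation to reach $Y_t = 0$.

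**To finish**, I would note that the process reaches a state with $Y_t = \Theta(n)$ very quickly — indeed after the initial uniform coloring the expected number of unhappy vertices is $\Theta(n)$ (a constant fraction of colors are repeated), and this holds with probability bounded below by a constant, e.g.\ via a second-moment or Azuma argument on the number of colors used. Starting from such a state, the drift estimate above — formalized, for instance, by applying a variant of the negative-drift / gambler's-ruin theorem (or just a direct union bound over $e^{cn}$ rounds that the walk never enters the ``danger zone'' $Y_t < \eps n$, using that crossing from $\eps n$ down to $0$ in one shot has probability $e^{-\Omega(n)}$ and that once below $\eps n/2$ the walk is pushed back above $\eps n$ with overwhelming probability) — yields that the hitting time of $Y_t = 0$ is $e^{\Omega(n)}$ with probability $1-o(1)$, hence in expectation.

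**The main obstacle** I anticipate is controlling the behavior of $Y_t$ in the \emph{entire} intermediate range rather than just near the boundary: one needs a clean, uniform statement that from \emph{any} state with $k$ unhappy vertices the process is very unlikely to become proper before $Y_t$ returns to order $n$, so that one cannot ``sneak up'' on the proper coloring through a long sequence of lucky small steps. The cleanest route is probably to establish a one-step bound of the form $\Pr[Y_{t+1} = 0 \mid Y_t = k] \le (k/n)^k \le e^{-k\ln(n/k)}$ together with $\Pr[Y_{t+1}< \beta k \mid Y_t = k]$ being exponentially small for a suitable constant $\beta>1$, and then feed these into a standard negative-drift tail bound; verifying the birthday-style collision estimate uniformly in $k$ is the one genuinely technical piece, but it is elementary.
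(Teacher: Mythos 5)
Your strategy is essentially the paper's in spirit: both arguments hinge on a uniform one-step statement that from any state with $\Omega(n)$ conflicted vertices the next state again has $\Omega(n)$ conflicted vertices except with probability $e^{-\Omega(n)}$ (proved by balls-into-bins concentration on the number of ``hit'' colors, each hit color forcing at least two conflicts), followed by the observation that the process cannot terminate without first leaving this region, so a union bound over $e^{cn}$ rounds yields the bound. The differences are in packaging, and two of your intermediate claims need repair. First, the growth claim $\Pr[Y_{t+1}<\beta k \mid Y_t=k]\le e^{-\Omega(k)}$ with a constant $\beta>1$ cannot hold for all $k$: when $k$ is close to $n$ (in particular in the very first round, where effectively all $n$ vertices are recolored) one has $Y_{t+1}\approx(1-1/e)n<k$ with overwhelming probability. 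You must either restrict the growth claim to $k\le \alpha n$ and pair it with the weaker no-halving bound $\Pr[Y_{t+1}\le k/2\mid Y_t=k]\le e^{-\Omega(k)}$ for large $k$ (true, but via collisions among the recolored vertices rather than hits on happy colors, so a case split is needed), or do what the paper does: its two-batch uncovering trick measures collisions of $\varepsilon n$ withheld conflicted vertices against the colors revealed in the first batch, which handles every state with more than $\varepsilon n$ conflicts, including the all-conflicted one, in a single stroke. Second, ``feed these into a standard negative-drift tail bound'' is shaky because such theorems require bounded or exponentially-tailed step sizes, while here a typical step moves $Y_t$ by $\Theta(n)$; your fallback, the direct union bound over $e^{cn}$ rounds using the $e^{-\Omega(n)}$ one-step probability of entering the low-conflict region, is the correct route and is exactly how the paper uses its inequality. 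Relatedly, ``once below $\varepsilon n/2$ the walk is pushed back up with overwhelming probability'' is not overwhelming in $n$ once $Y_t$ is small (from constant $k$ the push-back fails with probability $\Theta(k^2/n)$ and termination has probability $\Theta(n^{-k})$), so the clean argument should show the walk never leaves the $\Omega(n)$ region at all rather than permit excursions and recoveries. With these adjustments your proof goes through; the paper's formulation buys a single lemma covering all states with no drift machinery, while yours makes explicit the (true) multiplicative push away from the proper coloring when few vertices are conflicted.
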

\begin{proof}
  Fix a sufficiently small constant $\eps >0$, e.g. $\eps = 0.1$. For a round
  $t$, let $X_t$ be the number of conflicted vertices, i.e., the number of
  vertices whose color is not unique. Due to symmetry, $X_t$ is a Markov
  process. Let $T_{\eps}$ be the first round in which $X_t \le \eps n$. We
  show that $T_\eps$ has exponentially large expectation. Consider any round $t$
  with $X_t = x > \eps n$. Then we show that
  \begin{equation}\label{eq:exp_bound}
    \Pr[X_{t+1} \le \eps n \mid X_t = x] = e^{-\Omega(n)},
  \end{equation}
  where the hidden constant is independent of $x$. We remark that the same
  argument also shows that with high probability $X_1 >\eps n$, since the
  initial round is formally equivalent to the hypothetical case $X_0 = n$. So
  the proposition follows if we can show~\eqref{eq:exp_bound}.

  To show \eqref{eq:exp_bound}, we uncover the new colors in two batches. In the
  first batch, we uncover the colors of all but $\eps n$ vertices. If there are
  more than $\eps n$ vertices in conflicts from the first batch, then there is
  nothing to show. So in the following we may assume (and implicitly condition)
  on the opposite event that uncovering the first batch creates at most $\eps n$
  conflicted vertices. This implies that the set $C_1$ of colors appearing among
  the $(1-\eps) n$ uncovered vertices, has size at least $|C_1| \ge (1-2\eps)n$.
  Let $C_2 \subseteq C_1$ be the set of colors in $C_1$ that also appear in the
  second batch, i.e., for which a conflict is created by the second batch. The
  probability that a fixed color in $C_1$ does \emph{not} occur in $C_2$ is
  $(1-1/n)^{\eps n} = e^{-\eps + O(1/n)} \leq 1-7\eps/8$ for sufficiently large
  $n$, where we use that $e^{-\eps} < 1-7\eps/8$ for $\eps < 0.2$. Hence,
  $\E[|C_2|] \geq 7\eps/8\cdot (1-2\eps)n \geq 5\eps/8\cdot n$ for $\eps \le
  1/7$.

  The size of $C_2$ is given by the number of non-empty bins in a {\em
  Balls-and-Bins} problem, and this number is known to be concentrated around
  its expectation since the number of empty bins is negatively associated, and
  thus the Chernoff bounds are applicable. Since this is a well-known argument,
  we refrain from spelling out the details and refer the reader to the standard
  exposition~\cite[Proposition 29 and Section 3.3]{dubhashi1996balls}. The
  result is that $\Pr[|C_2| \leq \eps/2\cdot n] = e^{-\Omega(n)}$.

  It remains to observe that $X_{t+1} > 2|C_2|$, since every color in $C_2$
  causes at least two conflicted vertices (one from the second batch and one
  from the rest). Hence, $\Pr[X_{t+1} \leq \eps n] \le \Pr[|C_2| \leq
  \eps/2\cdot n] = e^{-\Omega(n)}$, as required.
\end{proof}

\bibliography{references}

\end{document}